\newenvironment{lenumerate}[2][]
{\begin{enumerate}[label=(#2\arabic*),leftmargin=0.2in,itemindent=0.15in,#1]}
{\end{enumerate}}
\setlist*[enumerate,1]{label={\itshape\arabic*)}}
\newcommand{\paragraphswithstop}{%
\let\copyparagraph\paragraph%
\renewcommand\paragraph[1]{\copyparagraph{##1.}}%
}
\newsavebox{\boxifnotempty}
\newcommand{\displayifnotempty}[3]{\sbox\boxifnotempty{#2}\setbox0=\hbox{\usebox{\boxifnotempty}\unskip}%
\ifdim\wd0=0pt
\else
 #1\usebox{\boxifnotempty}#3%
\fi%
}
\newcommand{\ifempty}[2]{\setbox0=\hbox{#1\unskip}%
\ifdim\wd0=0pt%
 #2%
\fi%
}
\newcommand{\ifnotempty}[2]{\setbox0=\hbox{#1\unskip}%
\ifdim\wd0>0pt%
 #2%
\fi%
}
\newcommand*\newstoreddef[1]{
  \BeforeClosingMainAux{%
    \immediate\write\@auxout{%
      \string\restoredef{#1}{\csname #1\endcsname}%
    }%
  }%
}
\newcommand*{\restoredef}[2]{
  \expandafter\gdef\csname stored@#1\endcsname{#2}%
}
\newcommand*{\storeddef}[1]{
  \@ifundefined{stored@#1}{0}{\csname stored@#1\endcsname}%
}
\newcommand{\real}[1]{\mathbb{R}^{#1}{}}
\newcommand{\naturals}[1]{\mathbb{N}^{#1}{}}
\newcommand{\bmat}[1]{\begin{bmatrix}#1\end{bmatrix}}
\newcommand{\vct}[1]{\mathbf{#1}}
\DeclareMathOperator*{\argmax}{\arg\!\max}
\providecommand{\vw}{\vct{w}}
\providecommand{\cC}{\mathcal{C}}
\providecommand{\cI}{\mathcal{I}}
\providecommand{\cS}{\mathcal{S}}
\providecommand{\cX}{\mathcal{X}}
\newcommand{\newcolorlabel}[2]{%
  \expandafter\newcommand\csname #1\endcsname[1]{%
    \colorbox{#2}{\color{white}\textsf{\textbf{##1}}}}%
}
\newcommand{\newcommenter}[2]{%
  \expandafter\newcommand\csname #1\endcsname[1]{%
    \fcolorbox{#2}{#2}{\color{white}\textsf{\textbf{#1}}}
    {\color{#2}##1}}%
  \expandafter\newcommand\csname at#1\endcsname{%
    \fcolorbox{#2}{#2}{\color{white}\textsf{\textbf{@#1}}}
    {\color{#2}}}%
  \expandafter\newcommand\csname #1hl\endcsname[2]{%
    \colorbox{#2}{\color{white}\textsf{\textbf{#1}}}\sethlcolor{Azure2}\hl{##2}~%
    \expandafter\ifx\csname commentarrow\endcsname\relax$\leftarrow$\else \commentarrow[#2]\fi~%
    {\color{#2}##1}}%
  \expandafter\newcommand\csname #1st\endcsname[2]{%
    \colorbox{#2}{\color{white}\textsf{\textbf{#1}}}\sout{##2}~%
    \expandafter\ifx\csname commentarrow\endcsname\relax$\leftarrow$\else \commentarrow[#2]\fi~%
    {\color{#2}##1}}%
}
\tikzset{
  dim above/.style={to path={\pgfextra{
        \pgfinterruptpath
        \draw[>=latex,|->|] let
        \p1=($(\tikztostart)!1.5em!90:(\tikztotarget)$),
        \p2=($(\tikztotarget)!1.5em!-90:(\tikztostart)$)
        in(\p1) -- (\p2) node[pos=.5,sloped,above]{#1};
        \endpgfinterruptpath
      }
    }
  },
  dim double above/.style={to path={\pgfextra{
        \pgfinterruptpath
        \draw[>=latex,|->|] let
        \p1=($(\tikztostart)!3em!90:(\tikztotarget)$),
        \p2=($(\tikztotarget)!3em!-90:(\tikztostart)$)
        in(\p1) -- (\p2) node[pos=.5,sloped,above]{#1};
        \endpgfinterruptpath
      }
    }
  },
  dim below/.style={to path={\pgfextra{
        \pgfinterruptpath
        \draw[>=latex,|->|] let 
        \p1=($(\tikztostart)!-1em!-90:(\tikztotarget)$),
        \p2=($(\tikztotarget)!-1em!90:(\tikztostart)$)
        in (\p1) -- (\p2) node[pos=.5,sloped,below]{#1};
        \endpgfinterruptpath
      }
    }
  },
}
\tikzset{
    right angle quadrant/.code={
        \pgfmathsetmacro\quadranta{{1,1,-1,-1}[#1-1]}     
        \pgfmathsetmacro\quadrantb{{1,-1,-1,1}[#1-1]}},
    right angle quadrant=1, 
    right angle length/.code={\def\rightanglelength{#1}},   
    right angle length=2ex, 
    right angle symbol/.style n args={3}{
        insert path={
            let \p0 = ($(#1)!(#3)!(#2)$) in     
                let \p1 = ($(\p0)!\quadranta*\rightanglelength!(#3)$), 
                \p2 = ($(\p0)!\quadrantb*\rightanglelength!(#2)$) in 
                let \p3 = ($(\p1)+(\p2)-(\p0)$) in  
            (\p1) -- (\p3) -- (\p2)
        }
    }
}
\newcommand{\pgfextractangle}[3]{%
    \pgfmathanglebetweenpoints{\pgfpointanchor{#2}{center}}
                              {\pgfpointanchor{#3}{center}}
    \global\let#1\pgfmathresult  
}
\newcommand{\commentarrow}[1][Azure4]{\tikz[baseline=-3pt]{\node[shape border uses incircle, fill=#1,rotate=180,single arrow, inner sep=1pt, minimum size=6pt, single arrow head extend=2pt]{};}}
\tikzset{ax/.style={-latex,line width=2pt}}
\tikzset{camera/.style={fill=Sienna1,fill opacity=0.5},%
image plane/.style={draw=RoyalBlue3,line width=2pt}}
\DeclareMathOperator{\E}{\mathbb{E}}
\DeclareMathOperator{\bernoulli}{Ber}
\newtheorem{example}{Example}
\title{\LARGE \bf


TLINet: Differentiable  Neural Network  Temporal Logic Inference




}
\author{Danyang Li$^{1}$, Mingyu Cai$^{3}$, Cristian-Ioan Vasile$^{2}$, Roberto Tron$^{1}$
\thanks{$^{1}$Danyang Li and Roberto Tron are with Mechanical Engineering Department, Boston University, Boston, MA 02215, USA.
        {\tt\small danyangl@bu.edu, tron@bu.edu}}%
\thanks{$^{2}$ C.I. Vasile is with the Department of Mechanical Engineering and Mechanics, Lehigh University, Bethlehem, PA, USA.
        {\tt\small cvasile@lehigh.edu}}%

\thanks{$^{2}$M. Cai is with the Department of Mechanical Engineering and Mechanics, University of California Riverside, Riverside, CA, USA.
{\tt\small mingyu.cai@ucr.edu}}%
}
\begin{document}

\def\next{\circ}
\def\always{\square}
\def\event{\lozenge} 
\def\relu{\text{ReLU}}

\maketitle
\thispagestyle{empty}
\pagestyle{empty}

\begin{abstract}
There has been a growing interest in extracting formal descriptions of the system behaviors from data. Signal Temporal Logic (STL) is an expressive formal language used to describe spatial-temporal properties with interpretability. 
This paper introduces TLINet, a neural-symbolic framework for learning STL formulas. The computation in TLINet is differentiable, enabling the usage of off-the-shelf gradient-based tools during the learning process.
In contrast to existing approaches, we introduce approximation methods for max operator designed specifically for temporal logic-based gradient techniques, ensuring the correctness of STL satisfaction evaluation. Our framework not only learns the structure but also the parameters of STL formulas, allowing flexible combinations of operators and various logical structures.
We validate TLINet against state-of-the-art baselines, demonstrating that our approach outperforms these baselines in terms of interpretability, compactness, rich expressibility, and computational efficiency.
\end{abstract}

\begin{IEEEkeywords}
Formal Methods, Signal Temporal Logic, Neural Network, Temporal Logic Inference
\end{IEEEkeywords}

\section{INTRODUCTION}
Machine learning techniques, particularly neural networks, have seen widespread application across various fields, including motion planning and control for autonomous systems. Despite the considerable success achieved with neural networks in this domain, their lack of interpretability poses significant challenges. Interpretability refers to the ability to provide explanations in understandable terms to humans~\cite{doshi2017towards}. This limitation makes neural networks difficult to verify, sparking a growing interest in more interpretable models applicable across various tasks, such as debugging and validating AI-integrated systems~\cite{tjoa2020survey}.

One domain where interpretability is particularly crucial is the analysis and interpretation of time series data inherent in dynamical systems like autonomous drones and robot arms. Traditional approaches to applying neural networks to time series data often involve the use of black-box models, such as Recurrent Neural Networks (RNNs)~\cite{schuster1997bidirectional}, Long-Short-Term Memory (LSTM) networks~\cite{hochreiter1997long}, and Transformers~\cite{vaswani2017attention}. Understanding and validating the decisions made by these models remain elusive, raising concerns about their suitability for safety-critical applications.

In response to this challenge, researchers have explored alternative methodologies that offer greater transparency and comprehensibility. One promising approach that has garnered attention is Signal Temporal Logic (STL).
Defined over continuous-time domains, STL provides a formal language for expressing complex temporal and logical properties in a manner resembling natural language~\cite{maler2004monitoring}.
By quantifying the satisfaction of STL specifications, it facilitates various optimization objectives and constraints, offering a principled framework for controlling and reasoning about autonomous systems, spanning applications such as control~\cite{raman2014model, lindemann2018control,liu2024interpretable} and motion planning~\cite{vasile2017sampling}.
Furthermore, STL holds the potential to bridge the gap between raw time series data and interpretable logic specifications, thus addressing the need for formalized, human-understandable representations in machine learning applications.

In this paper, we aim to explore the role of temporal logic inference in extracting interpretable models from time series data.
Specifically, we propose a neural network for learning STL formulas from time series data to classify desired and undesired behaviors as satisfying and violating behaviors, respectively. By designing neural networks for temporal logic inference, we seek to bridge the gap between data-driven insights and formalized logical representations, ultimately paving the way for safe and transparent autonomous systems.

\vspace{0.1cm}
\textbf{Related Works: }
There exists a substantial body of literature on methods for temporal logic inference. Decision tree approach has been explored for deriving temporal logic formulas as classifiers~\cite{bartocci2014data, bombara2016decision, mohammadinejad2020interpretable, bombara2021offline, xu2019information, aasi2022classification, linard2022inference}.
A decision tree resembles a non-parametric supervised learning method where intermediate nodes partition the data based on specific criteria, guiding the flow towards leaf nodes representing the final decision.
While decision trees offer a structured approach to classification, they do not scale well by dataset size and tree depth~\cite{dietterich1997machine, sani2018computational}.
In contrast, neural networks offer scalability through batching or vectorizing the data and utilizing state-of-the-art gradient-based optimization techniques for training.

Many studies have investigated embedding the structure of STL formulas in neural network computation graphs by associating layers with Boolean and temporal operators via smooth approximations of min and max functions~\cite{jha2019telex, leung2019backpropagation, ketenci2020learning, yan2021stone, baharisangari2021weighted, chen2022interpretable}. These studies broadly fall into two categories: template-based learning and template-free learning. Template-based learning involves fixing the structure of the STL formula and only learning its parameters \cite{jha2019telex, leung2019backpropagation, ketenci2020learning, yan2021stone, baharisangari2021weighted}, whereas template-free learning \cite{chen2022interpretable} learns the STL formula without specifying prior structures. Our approach aligns with the template-free learning category. There are several challenges when integrating STL into neural networks.
For example, the recursive min/max operations may lead to gradient vanishing problems; the selection of partial signals from the time intervals associated with the temporal operators, ``always" and ``eventually," is non-differentiable.

To address non-smoothness challenges in backpropagation of neural networks, several studies have proposed differentiable versions of robustness computation for STL~\cite{mehdipour2020specifying,varnai2020robustness,leung2019backpropagation, yan2021stone, chen2022interpretable}. Some investigations, such as those by Yan et al. \cite{yan2021stone} and Chen et al. \cite{chen2022interpretable}, delve into the realm of learning weighted STL (wSTL) formulas~\cite{9309020}.
The aforementioned works utilize smooth approximations that are not sound due to the approximation errors of the robustness calculations. 
Chen et al. \cite{chen2022interpretable} present an alternative approach using the arithmetic-geometric mean (AGM) robustness \cite{mehdipour2019arithmetic}.
The backpropagation of the AGM robustness computation is not scalable.
As a result, it can only handle datasets with a limited number of samples, significantly restricting its applicability. Moreover, \cite{leung2019backpropagation} and \cite{yan2021stone} are not able to learn the structure and the temporal information of the STL formula. \cite{chen2022interpretable} learns the temporal information through a black-box neural network, thereby reducing the interpretability of the learned STL formula.

\vspace{0.1cm}
\textbf{Contributions: }
The contributions of this paper are as follows.
First, we propose TLINet, a novel framework for temporal logic inference using neural networks.
TLINet not only learns the parameters of STL formulas but also captures the structure of the formula and the involved operators and predicates. 
Second, we introduce two innovative approximations for the $\max$ operator in the computation of STL robustness: sparse softmax and averaged max. These approximations are specifically designed to handle temporal and Boolean operators within STL, respectively. Sparse softmax optimizes computational efficiency in temporal contexts, while averaged max provides a succinct representation suitable for Boolean operations. Both approximations are rigorously designed to support gradient-based methods and are accompanied by soundness guarantees. 
Lastly, we apply TLINet to diverse scenarios containing time series data with different properties to show the efficiency and flexibility of our method.

This work extends our previous conference paper~\cite{li2023learning} by making several notable contributions and extensions: (i) A novel vectorized encoding of STL is formulated specifically tailored for training neural networks, (ii) TLINet is able to capture 2nd-order STL specifications, (iii) the learned STL formulas are not confined to disjunctive normal form (DNF)~\cite{hilbert2022principles}, (iv) TLINet can learn the types of operators involved in STL formulas, (v) an additional max approximation is introduced for learning the structure of STL formulas, (vi) additional experiments for STL inference example demonstrating the efficacy of TLINet in learning STL formulas with various structure.

\section{Problem Statement}
In this section, we introduce the syntax and semantics of Signal Temporal Logic (STL), as well as the Temporal Logic Inference problem of inferring STL formulas from time series data.

Let $\mathbf s=[s(0),\ldots,s(l-1)]$ denote a \emph{signal}, where $l$ is the length of the signal, and $s(t)\in\real{d}$ is the state of signal $\mathbf s$ at time $t$.

\subsection{Signal Temporal Logic}
We use STL formulas to specify the temporal and spatial properties of signals. In this paper, we consider a fragment of STL~\cite{maler2004monitoring} without the until operator. 
\begin{definition}\label{def:stl syntax}
The syntax of STL formulas is \cite{maler2004monitoring} defined recursively as:
\begin{equation}
    \phi \Coloneqq \mu \mid \phi_1\land\phi_2 \mid \phi_1\lor\phi_2 \mid \event_{[t_1,t_2]}\phi \mid \always_{[t_1,t_2]}\phi,
\end{equation}
where $\mu$ is a predicate $\mu:= \mathbf{a}^\top\mathbf{s} \sim b$, where $\sim\in\{>,<\}$, $\mathbf{a}\in\real{d}$, $b\in\real{}$. $\phi,\phi_1,\phi_2$ are STL formulas. The Boolean operators $\land,\lor$ are \emph{conjunction} and \emph{disjunction}, respectively. The temporal operators $\event,\always$ represent \emph{eventually} and \emph{always}. $\event_{[t_1,t_2]}\phi$ is true if $\phi$ is satisfied for at least one point $t\in[t_1,t_2]\cap\mathbb Z$, while $\always_{[t_1,t_2]}\phi$ is true if $\phi$ is satisfied for all time points $t\in[t_1,t_2]\cap\mathbb Z$.
\end{definition}

\begin{definition}\label{def:stl semantics}
The quantitative semantics~\cite{donze2010robust}, i.e., the robustness, of an STL formula $\phi$ for signal $\mathbf s$ at time $t$ is defined as:
\begin{subequations}\label{eq:stl semantics}
\begin{align}
    r(\mathbf{s},\mu,t) &= \mathbf{a}^\top \mathbf{s}(t)-b\label{stl-semantics:sub1},\\
    r(\mathbf{s},\land_{i=1}^n \phi_i,t) &= \min_{i=1:n}\{r(\mathbf{s},\phi_i,t)\}\label{stl-semantics:sub3},\\
    r(\mathbf{s},\lor_{i=1}^n \phi_i,t) &= \max_{i=1:n}\{r(\mathbf{s},\phi_i,t)\}\label{stl-semantics:sub4},\\
    r(\mathbf{s},\always_{[t_1,t_2]}\phi,t) &= \min_{t'\in[t+t_1,t+t_2]}r(\mathbf{s},\phi,t')\label{stl-semantics:sub5},\\
    r(\mathbf{s},\event_{[t_1,t_2]}\phi,t) &= \max_{t'\in[t+t_1,t+t_2]}r(\mathbf{s},\phi,t')\label{stl-semantics:sub6}.
\end{align}
\end{subequations}
The robustness is a scalar that measures the degree of satisfaction. The signal $\mathbf s$ is said to satisfy the formula $\phi$, denoted as $\mathbf s \models \phi$, if and only if $r(\mathbf s, \phi, 0) > 0$. Otherwise, $\mathbf s$ is said to violate $\phi$, denoted as $\mathbf s \not\models \phi$.
By convention, we consider zero robustness as violation.
\end{definition}

\subsection{Problem Statement}
\label{sec:tli-problem}
In this paper, we focus on the Temporal Logic Inference (TLI) problem.
The goal is to learn an STL formula from time series data that describes the spatial-temporal properties within the data.
The computed STL formula should classify the data into desired and undesired behaviors.

Let $\cS = \{(\mathbf{s}^i,c^i)\}_{i=1}^N$ be a labeled dataset, where $\mathbf{s}^i$ is the $i^{th}$ signal with label $c^i\in \cC$, and $\cC=\{1,-1\}$ is the set of classes.

\begin{problem}
Given $\cS=\{(s^i,c^i)\}_{i=1}^N$, learn an STL formula $\phi$ that accurately classifies the data into the desired classes, minimizing the misclassification rate (MCR), defined as the ratio of misclassified samples to the total number of samples:
\begin{equation*}
    MCR = \frac{\lvert\{\mathbf{s}^i\mid (\mathbf{s}^i \models \phi \land c^i = -1) \lor (\mathbf{s}^i \not\models \phi \land c^i = 1)\}\rvert}{N}.
\end{equation*}
\end{problem}

\section{Approach Overview}
In this section, we present an overview of our approach, focusing on the integration of Signal Temporal Logic (STL) and neural networks for Temporal Logic Inference (TLI).

According to the grammar in \eqref{def:stl syntax}, STL formulas are composed of operators arranged in a hierarchical manner. Similarly, neural networks consist of layers with interconnected nodes, where higher layers abstract features from lower layers, forming a hierarchical representation. Furthermore, the operators of STL have analogies with the neurons of neural networks. Thus, our approach aims to construct a neural network that can be translated to an STL specification after training.
In the next section, we introduce an encoding for STL formulas that facilitates embedding them into neural networks and enables learning of the formula's structure.

\subsection{Vectorized Signal Temporal Logic}
We define an encoding language for STL templates, called \emph{vectorized Signal Temporal Logic} (vSTL). It is an extension of Parametric Signal Temporal Logic (PSTL) \cite{asarin2012parametric} and weighted Signal Temporal Logic (wSTL) \cite{mehdipour2020specifying}, where binary weights are used to parameterize the structure of the formula, while the spatial parameters $\mathbf{a}$, $b$ are considered continuous parameters. The presented encoding works for discrete-time signals and systems.

\begin{definition}\label{def:vstl syntax}
The syntax of vSTL is derived from the STL syntax using binary weight vectors:
\begin{equation}
    \phi::=\mu \mid \land_i^{\mathbf{w}^b} \phi_i \mid \lor_i^{\mathbf{w}^b} \phi_i \mid \event_{\mathbf{w}^I} \phi \mid \always_{\mathbf{w}^I} \phi,
\end{equation}
where $\mathbf{w}^b=[w^b_i]_{i=1:n}\in\{0,1\}^n$ is a \emph{Boolean vector} associated with a Boolean operator $\land$ and $\lor$, and $w^b_i=1$ if $\phi_i$ is included in the Boolean operation; $\mathbf{w}^I\in\{0,1\}^l$ is a \emph{time vector} associated with temporal operators $\event$ and $\always$, and $I=[t_1,t_2]$ represents the time interval, $w^I_t=1$ if $t_1\leq t \leq t_2$, else $w^I_t=0$. The interpretation is consistent with that of STL by assigning binary weights.
\end{definition}

To enable vectorized computation, we introduce the notion of \emph{robustness vector}.
\begin{definition}[Robustness Vector]
    Given a signal $\mathbf s$, the robustness vector of an STL formula $\phi$ is $\mathbf{r}^v(\mathbf{s},\phi)$ containing the robustness values of $\phi$ at all time steps:
    \begin{equation}
        \mathbf{r}^v(\mathbf{s},\phi) = \bmat{r(\mathbf{s},\phi,0), r(\mathbf{s},\phi,1), \cdots, r(\mathbf{s},\phi,l-1)},
    \end{equation}
    and the robustness vector at time $t$ is $\mathbf{r}^b_{\varphi}(\mathbf{s},t)$ defined as:
    \begin{equation}
        \mathbf{r}^b_{\varphi}(\mathbf{s},t) = \bmat{r(\mathbf{s},\phi_1,t), r(\mathbf{s},\phi_2,t), \cdots, r(\mathbf{s},\phi_n,t)},
    \end{equation}
    where $\varphi$ takes the Boolean operation of children subformulas $\phi_1,\cdots,\phi_n$ based on $\mathbf{w}^b$, i.e., $\varphi=\land_i^{\mathbf{w}^b}\phi_{i}$ or $\varphi=\land_i^{\mathbf{w}^b} \phi_{i}$.
\end{definition}

\begin{definition}\label{def:vstl robustness}
The robustness of a vSTL formula $\phi$ over signal $\mathbf s$ at time $t$ is defined as:
\begin{subequations}
\begin{align}
    r(\mathbf{s},\mu,t) &= \mathbf{a}^\top\mathbf{s}(t)-b\label{vstl:sub5},\\
    r(\mathbf{s},\land_i^{\mathbf{w}^b} \phi_i,t) &= -\max_{\mathbf{w}^b} (-\mathbf{r}^b_{\varphi}(\mathbf{s},t))\label{vstl:sub1},\\
    r(\mathbf{s},\lor_i^{\mathbf{w}^b} \phi_i,t) &= \max_{\mathbf{w}^b} (\mathbf{r}^b_{\varphi}(\mathbf{s},t) ),\label{vstl:sub2}\\
    r(\mathbf{s},\always_{\mathbf{w}^I}\phi,t) &= -\max_{\mathbf{w}^I} (-\mathbf{r}^v(\mathbf{s},\phi) ),\label{vstl:sub3}\\
    r(\mathbf{s},\event_{\mathbf{w}^I}\phi,t) &= \max_{\mathbf{w}^I}(\mathbf{r}^v(\mathbf{s},\phi))\label{vstl:sub4},
\end{align}
\end{subequations}

where
\begin{equation}\label{vstl:max}
\begin{aligned}
    \max_{\mathbf{w}^b} (\mathbf{r}^b_{\varphi}(\mathbf{s},t)) &= \max \left[r(\mathbf{s},\phi_i,t)\right]_{\mathbf{w}^b_i=1},\\
    \max_{\mathbf{w}^I} (\mathbf{r}^v(\mathbf{s},\phi)) &= \max \left[r(\mathbf{s},\phi,t')\right]_{\mathbf{w}^I_{t'}=1},
\end{aligned}
\end{equation}
represent the maxima over the values where the weight vectors are one.
\end{definition}

\begin{proposition}
From a vSTL formula, we can syntactically extract only one equivalent STL formula.
\end{proposition}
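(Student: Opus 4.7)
The plan is to prove the claim by structural induction on the vSTL formula $\phi$, exhibiting an explicit syntactic translation map $T$ from vSTL into STL and verifying two things: (a) that $T(\phi)$ is uniquely determined by $\phi$ (the ``only one'' part) and (b) that the translation is semantically faithful, i.e. $r(\mathbf{s}, \phi, t) = r(\mathbf{s}, T(\phi), t)$ for every signal $\mathbf{s}$ and every $t$, so the extracted STL formula is genuinely equivalent. The induction matches the recursive grammar in Definition~\ref{def:vstl syntax}.

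In the base case, a predicate $\mu$ has identical syntax in both languages, so I set $T(\mu) = \mu$; equivalence is immediate from \eqref{vstl:sub5} and \eqref{stl-semantics:sub1}. In the inductive step I define, for Boolean operators,
\begin{equation*}
T\bigl(\land_i^{\mathbf{w}^b}\phi_i\bigr) = \bigwedge_{i \,:\, w^b_i = 1} T(\phi_i), \qquad T\bigl(\lor_i^{\mathbf{w}^b}\phi_i\bigr) = \bigvee_{i \,:\, w^b_i = 1} T(\phi_i),
\end{equation*}
and for temporal operators I exploit the structural constraint in Definition~\ref{def:vstl syntax} that $\mathbf{w}^I$ is the indicator of a contiguous interval, recovering $t_1 = \min\{t : w^I_t = 1\}$ and $t_2 = \max\{t : w^I_t = 1\}$ and setting $T(\event_{\mathbf{w}^I}\phi) = \event_{[t_1,t_2]} T(\phi)$ and $T(\always_{\mathbf{w}^I}\phi) = \always_{[t_1,t_2]} T(\phi)$. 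Uniqueness of the extracted formula holds at each step because the selected index set (respectively, interval endpoints) is uniquely read off from the weight vector, and by the inductive hypothesis each $T(\phi_i)$ is itself unique. Semantic equivalence follows from comparing Definition~\ref{def:vstl robustness}, where $\max_{\mathbf{w}^b}$ and $\max_{\mathbf{w}^I}$ in \eqref{vstl:max} are by construction maxima taken over exactly the indices with weight one, against the unrestricted min/max in \eqref{stl-semantics:sub3}--\eqref{stl-semantics:sub6} applied to the translated subset or interval.

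The main obstacle, though a minor bookkeeping one, is handling potentially degenerate weight vectors: if $\mathbf{w}^b$ or $\mathbf{w}^I$ is identically zero, the extracted operator would act over an empty set, which has no standard STL counterpart. I would dispatch this either by restricting attention to non-degenerate weights (consistent with Definition~\ref{def:vstl syntax}, where $[t_1,t_2]$ is implicitly a non-empty interval and a Boolean operator is meaningful only when at least one child is selected) or by defining $T$ to collapse such trivial nodes. A secondary subtlety worth a line is that $\mathbf{w}^b$ need not select a contiguous subset, unlike $\mathbf{w}^I$; however, this does not affect uniqueness since STL conjunction and disjunction are defined over arbitrary finite sets of subformulas.
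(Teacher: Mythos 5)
Your proposal is correct and follows essentially the same route as the paper's proof: both extract the STL formula by reading off the interval endpoints from the indices where $w^I_t=1$ and the selected subformulas from the indices where $w^b_i=1$, yours simply packaging this as an explicit structural induction with a translation map. The additional care you take with semantic faithfulness and degenerate (all-zero) weight vectors is a reasonable tightening of the paper's informal two-sentence argument, not a different approach.
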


\begin{proof}
The time interval of an STL formula can be inferred from $\mathbf{w}^I$ by identifying the indices where $w^I_i=1$, while the subformulas involved in the Boolean operation can be deduced from $\mathbf{w}^b$ by locating the indices where $w^b_i=1$. It is worth noting that for an STL formula, we have an infinite number of vSTL formulas syntactically consistent with it. Since there are STL formulas that differ, but define the same language (set of satisfying signals), i.e., they are semantically equivalent.
\end{proof}

We use vSTL, which is defined on vectors, as it is more suitable for computation and training within neural networks (in particular, the weights $\vw^{I}$ and $\vw^{b}$ can be learned, see Section \ref{sec:modules}). Despite its vector-based representation, the robustness of vSTL formulas remains consistent with traditional STL. One notable advantage of vSTL is its ability to provide detailed information not only about what is included in the operator but also about what is excluded.
For instance, vSTL allows us to infer the subformulas that are not explicitly included in the STL formula. This additional level of information makes vSTL particularly informative and versatile, offering enhanced insights into properties for various applications.

Unless specified otherwise, this paper assumes the signal initiates at time 0, and its robustness is assessed at the same time point.

\subsection{TLINet as an STL formula}
We propose a differentiable Neural Network for the TLI problem, called TLINet.
Each of its layers contains modules for operators defined in Definition~\ref{def:vstl robustness}.
We introduce three types of modules: predicate, temporal, and Boolean modules. 
\begin{itemize}
    \item The predicate module learns the predicate type and spatial parameters.
    \item The temporal module learns the type of temporal operator and temporal parameters.
    \item The Boolean module learns the type of Boolean operator and the structure of the formula.
\end{itemize}

We construct the TLINet by specifying the number of layers, the type of each layer and the number of modules in each layer. After the training process, we decode the parameters of TLINet, translate each module to a composition of STL formulas, and extract the overall STL formula.
An example of a TLINet is shown in Fig.~\ref{fig:example-tlinet}.

\begin{figure}
\centering
\begin{tikzpicture}
\begin{scope}[every node/.style={minimum width=2cm,draw}]
\node[draw=DodgerBlue3,font=\fontsize{8}{8}\selectfont] (layer 1a) {\textsf{predicate module}};
\node[draw=DodgerBlue3,right=3mm of layer 1a,font=\fontsize{8}{8}\selectfont] (layer 1b) {\textsf{predicate module}};
\node[draw=black!30!green,below=5mm of layer 1a,font=\fontsize{8}{8}\selectfont] (layer 2a) {\textsf{Boolean module}};
\node[draw=black!30!green,font=\fontsize{8}{8}\selectfont] (layer 2b) at (layer 2a-|layer 1b) {\textsf{Boolean module}};
\node[draw=orange,below=5mm of layer 2a,font=\fontsize{8}{8}\selectfont] (layer 3a) {\textsf{temporal module}};
\node[draw=orange,font=\fontsize{8}{8}\selectfont] (layer 3b) at (layer 3a-|layer 2b) {\textsf{temporal module}};
\node[draw=orange,below=5mm of layer 3a,font=\fontsize{8}{8}\selectfont] (layer 4a) {\textsf{temporal module}};
\node[draw=orange,font=\fontsize{8}{8}\selectfont] (layer 4b) at (layer 4a-|layer 3b) {\textsf{temporal module}};
\node[draw=black!30!green,below of=layer4a,font=\fontsize{8}{8}\selectfont] (layer 5) at ($(layer 4a)!0.5!(layer 4b)$) {\textsf{Boolean module}};
\end{scope}

\newcommand{\ofst}{5mm}
\begin{scope}[every node/.style={dashed,draw,minimum width=6cm, minimum height=1cm,rounded corners=1mm}]
\node[DodgerBlue3] at ($(layer 1a)!0.5!(layer 1b)$)(frame 1) {};
\end{scope}
\begin{scope}[every note/.style={minimum width=2cm}]
\node[DodgerBlue3,left=0mm of frame 1,font=\fontsize{8}{8}\selectfont] {\textsf{Predicate Layer}};
\end{scope}
\begin{scope}[every node/.style={dashed,draw,minimum width=6cm, minimum height=1cm,rounded corners=1mm}]
\node[black!30!green] at ($(layer 2a)!0.5!(layer 2b)$)(frame 2) {};
\end{scope}
\begin{scope}[every note/.style={minimum width=2cm}]
\node[black!30!green,left=0mm of frame 2,font=\fontsize{8}{8}\selectfont] {\textsf{Boolean Layer}};
\end{scope}
\begin{scope}[every node/.style={dashed,draw,minimum width=6cm, minimum height=1cm,rounded corners=1mm}]
\node[orange] at ($(layer 3a)!0.5!(layer 3b)$)(frame 3) {};
\end{scope}
\begin{scope}[every note/.style={minimum width=2cm}]
\node[orange,left=0mm of frame 3,font=\fontsize{8}{8}\selectfont] {\textsf{Temporal Layer}};
\end{scope}
\begin{scope}[every node/.style={dashed,draw,minimum width=6cm, minimum height=1cm,rounded corners=1mm}]
\node[orange] at ($(layer 4a)!0.5!(layer 4b)$)(frame 4) {};
\end{scope}
\begin{scope}[every note/.style={minimum width=2cm}]
\node[orange,left=0mm of frame 4,font=\fontsize{8}{8}\selectfont] {\textsf{Temporal Layer}};
\end{scope}
\begin{scope}[every node/.style={dashed,draw,minimum width=6cm, minimum height=1cm,rounded corners=1mm}]
\node[black!30!green] at ($(layer 5)$)(frame 5) {};
\end{scope}
\begin{scope}[every note/.style={minimum width=2cm}]
\node[black!30!green,left=0mm of frame 5,font=\fontsize{8}{8}\selectfont] {\textsf{Boolean Layer}};
\end{scope}

\node[circle,fill=orange!70,minimum size=0.8cm] at ($(layer 1a)!0.5!(layer 1b)+(0,1)$) (s) {$\mathbf{s}$};
\node[circle,fill=orange!70,minimum size=0.8cm, below=5mm of layer 5] (r) {$\mathbf{r}$};

\begin{scope}[-latex]
\draw (s) -- (layer 1a);
\draw (s) -- (layer 1b);
\draw (layer 1a) -- (layer 2a);
\draw (layer 1b) -- (layer 2b);
\draw (layer 1a) -- (layer 2b);
\draw (layer 1b) -- (layer 2a);
\draw (layer 2a) -- (layer 3a);
\draw (layer 2b) -- (layer 3b);
\draw (layer 3a) -- (layer 4a);
\draw (layer 3b) -- (layer 4b);
\draw (layer 4a) -- (layer 5);
\draw (layer 4b) -- (layer 5);
\draw (layer 5) -- (r);
\end{scope}
\end{tikzpicture}
\caption{An example of a $5$-layer TLINet.}
\label{fig:example-tlinet}
\end{figure}
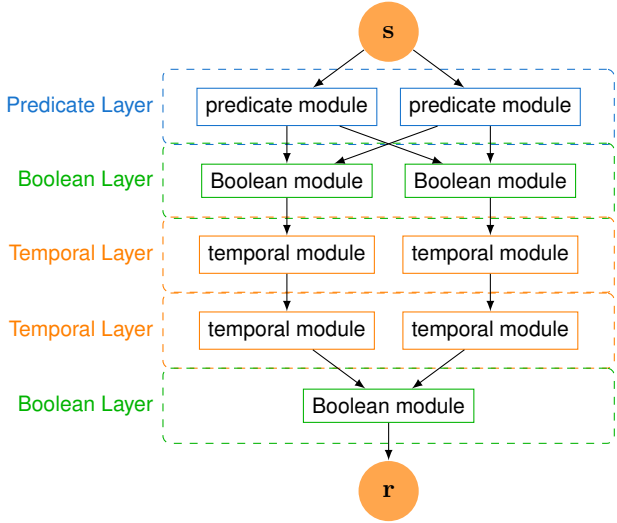

\section{STL Formula Modules}\label{sec:modules}
In this section, we describe the design of modules used in the construction of TLINet. 

\subsection{Predicate Module}\label{sec:predicate}
The predicate module is responsible for computing the robustness of predicates. It functions as a fully connected layer, employing a linear transformation through the weight $\mathbf{a}\in\mathbb{R}^d$ and bias $b \in \mathbb{R}$ on the input. Given the input signal $\mathbf{s} = \bmat{s(0),s(1),\cdots,s(l-1)}\in \mathbb{R}^{d \times l}$, the output is a robustness vector of predicate $\mu$ denoted as $\mathbf{r}^v(s,\mu) = \bmat{r(s,\mu,0),r(s,\mu,1),\cdots,r(s,\mu,l-1)} \in \mathbb{R}^{l}$, where $r(s,\mu,t)=\mathbf{a}^\top s(t)-b$.
The predicate can take the form of axis-aligned by setting some elements of $\mathbf{a}$ to $0$. Figure~\ref{fig:predicate module} shows the computation graph of the predicate module. Example~\ref{predicate example} illustrates the interpretation of the predicate's parameters and provides a visualization of the output generated by the predicate module.
\begin{figure}[ht]
    \centering
    \includegraphics[scale=0.6]{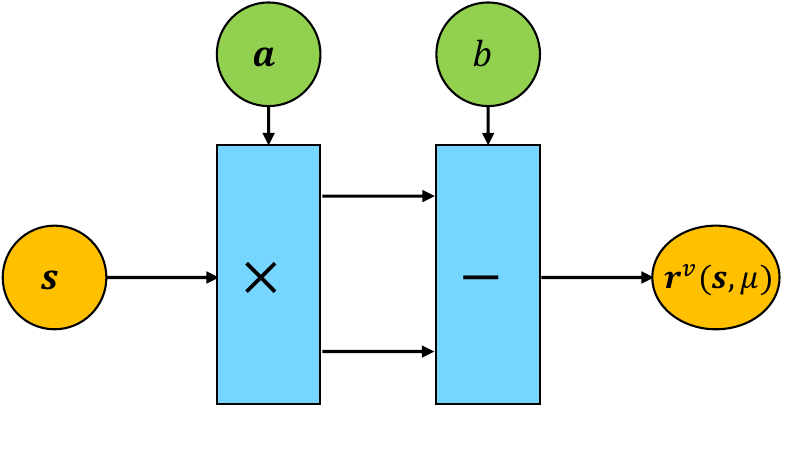}
    \caption{The computation graph of the predicate module, where $\mathbf{a}$ and $b$ are parameters of the module.}
    \label{fig:predicate module}
\end{figure}

\begin{example}\label{predicate example}
    Consider a signal $\mathbf{s}$ and a predicate module with weight $a=-1$ and bias $b=-0.1$, the corresponding predicate is $\mu:=s(t)< 0.1$, Figure \ref{fig:predicate example} shows the input (green) and the output (blue) in time sequences.
\end{example}
\begin{figure}[ht]
    \centering
    \includegraphics[scale=0.5]{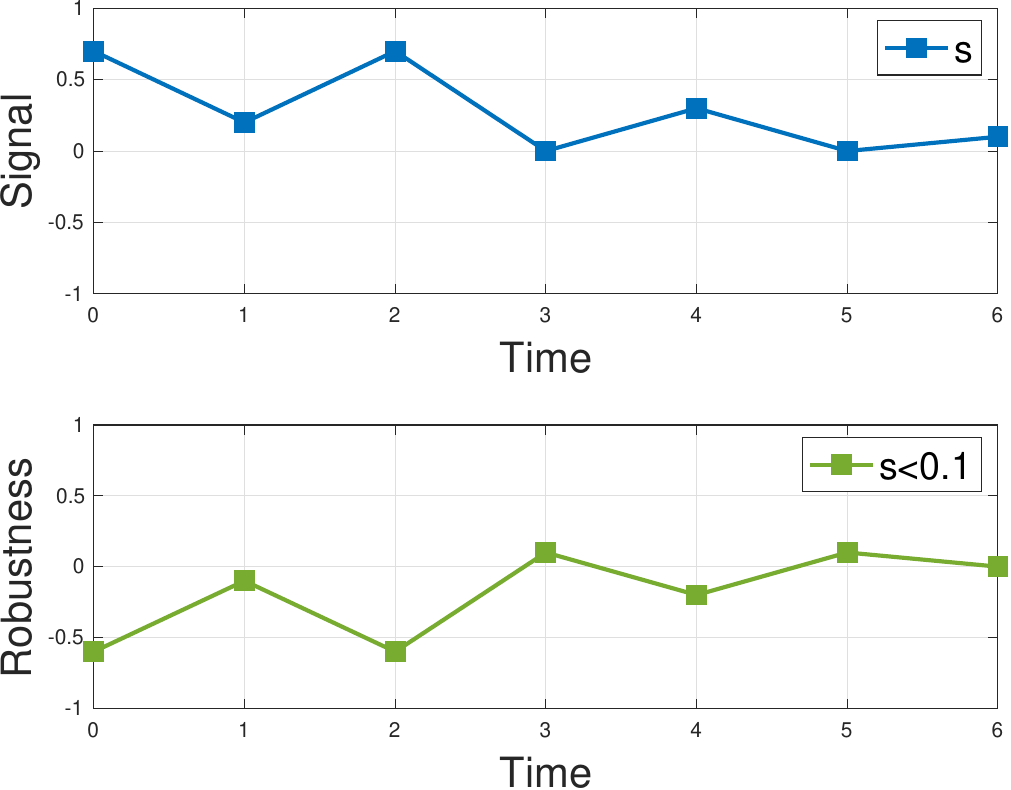}
    \caption{The sequence of robustness $r(t)=-s(t)+0.1$ for predicate $\mu:=s(t)< 0.1$ given a signal $\mathbf s$. The positive and negative robustness implies the degree of satisfaction and violation of the signal to the predicate, respectively.}
    \label{fig:predicate example}
\end{figure}

\subsection{Encoding operator type}\label{sec:variable}
Learning template-free STL formulas involves learning the temporal and Boolean operators defined in \eqref{vstl:sub1}-\eqref{vstl:sub4}. We introduce a binary variable $\kappa$ to determine the operator and generalize the max operation as follows:
\begin{equation}\label{eq:variable}
    r(s,\phi,t) = \kappa\max_{\mathbf{w}} \kappa\mathbf{r},
\end{equation}
where $\kappa$ acts as a switch controlling which operator is applied. For instance, for a temporal operator, if $\kappa=1$, it represents the eventually ($\event$) operator, if $\kappa=-1$, it represents the always ($\always$) operator. Similarly, for a Boolean operator, if $\kappa=1$, it represents the disjunction ($\lor$) operator, if $\kappa=-1$, it represents the conjunction ($\land$) operator.

Instead of directly learning the binary variable $\kappa$, we opt for a continuous parameterization approach. We introduce a real-valued variable $p_{\kappa}$, which represents the likelihood of $\kappa$ being $1$. Inspired by \cite{srinivas2017training}, we consider $\kappa$ as sampled from a Bernoulli distribution based on $p_{\kappa}$, ensuring determinism through the \emph{maximum-likelihood draw}.

Given a set $\cX=\{X_0,X_1\}$ with exactly two elements, define $\bernoulli_{\cX}(p)$ as a Bernoulli distribution such that if $x\sim \bernoulli_{\cX}(p)$ then
\begin{equation}
\begin{aligned}
P(x=X_0)&=p,\\
P(x=X_1)&=1-p.
\end{aligned}
\end{equation}
Then we define the corresponding \emph{maximum likelihood draw} distribution $\bernoulli_{ML,\cX}(p)$ such that if $x\sim \bernoulli_{ML,\cX}(p)$, then
\begin{equation}
    \begin{aligned}
        P(x=X_0)&=1 &\textrm{ if } 0.5\leq p \leq 1,\\
        P(x=X_1)&=1 &\textrm{ if } 0\leq p<0.5. 
    \end{aligned}
\end{equation}

Given the set $\cX_{\kappa}=\{1,-1\}$ with $X_0=1$, $X_1=-1$, and the probability $p_{\kappa}$, we have
\begin{equation}
    \kappa \sim \bernoulli_{ML,\cX_{\kappa}}(p_{\kappa}).
\end{equation}
The gradient of this sampling step is computed using the straight-through estimator\cite{bengio2013estimating}. To maintain the validity of $p_{\kappa}$, we use a clipping function to confine it within the range $\left[0,1\right]$\cite{srinivas2017training}:
\begin{equation}\label{eq:clip}
clip(x)=
    \begin{cases}
      1 & \textrm{if } x\geq 1,\\
      0 & \textrm{if } x\leq 0,\\
      x & \textrm{otherwise}.
    \end{cases}
\end{equation}

We can obtain $\kappa$ from $p_{\kappa}$ using:
\begin{equation}
    \kappa \sim \bernoulli_{ML,\cX_{\kappa}}(clip(p_{\kappa})).
\end{equation}

\subsection{Temporal Module}\label{sec:temporal}
We define the time vector $\mathbf{w}^I$ using time variables $t_1$, $t_2$ from the interval $I=[t_1,t_2]$. To facilitate this, we introduce the concept of \emph{time function}, denoted as $T(t_1,t_2)$, which generates the output $\mathbf{w}^I$.
\begin{definition}[Time Function]\label{time function}
Given time instants $t_1,t_2$ within the range $0\leq t_1 \leq t_2\leq l-1$, the time function $T(t_1,t_2):\mathbb{R}\times\mathbb{R}\to \{0,1\}^l$ is defined as:
    \begin{equation}
    T(t_1,t_2) = \mathbf{w}^I,
    \end{equation}
    where each element of $\mathbf{w}^I$ is given by:
    \begin{equation}
    \begin{aligned}
        w^I_t &= \begin{cases}
            1, & t_1\leq t \leq t_2\\
            0, & 0\leq t <t_1 \lor t_2 <t\leq l-1,
        \end{cases}  
    \end{aligned}
    \label{eq:time_function}
    \end{equation}
\end{definition}
\begin{remark}
    The time horizon $l$ is determined by the length of the signals.
\end{remark}
\begin{definition}[ReLU Activation Function]
The ReLU activation function is defined as:
\begin{equation}
\begin{aligned}
&\relu(x)=\begin{cases}
  x, & \textrm{if } x>0,\\
  0, & \textrm{otherwise}.
\end{cases}
\end{aligned}
\end{equation}
where $x \in \mathbb{R}$.
\end{definition}
In this paper, we adopt a specific time function utilizing the ReLU activation function, defined as:
\begin{equation}
\begin{aligned}
    &T(t_1,t_2)\\
    =&\frac{1}{\eta}\min \Bigl(\relu(\mathbf{n}-\mathbf{1}(t_1-\eta))-\relu(\mathbf{n}-\mathbf{1}t_1),\\
    &\relu(-\mathbf{n}+\mathbf{1}(t_2+\eta))-\relu(-\mathbf{n}+\mathbf{1}t_2)\Bigr),
\end{aligned}
\label{eq:relu time function}
\end{equation}
where $\mathbf{n}=\left[0,1,...,l-1\right]\in\naturals{l}$ is a vector containing $l$ consecutive integers starting from $0$ to $l-1$; $\mathbf{1}\in\{1\}^l$ is a vector with all elements equal to $1$, and $\eta \in\real{}_{> 0}$ is a hyperparameter controlling the slope steepness of $T(t_1, t_2)$.

The ReLU activation function in \eqref{eq:relu time function} can be replaced by other similar activation functions such as the sigmoid activation function or tanh activation function.

Figure \ref{fig:time function} illustrates an example of the time function \eqref{eq:relu time function} with $\eta=\{1,0.5,0.1\}$ for the time interval $I=[4,8]$ and signal length $l=13$. The resulting output $\mathbf{w}^I$ is:
\begin{equation}
\mathbf{w}^I = \bmat{0,0,0,0,1,1,1,1,1,0,0,0,0}.
\end{equation}

The time function requires only two parameters $t_1$, $t_2$ to generate $\mathbf{w}^I$, no matter how long the signal is.

\begin{figure}
    \centering
    \includegraphics[scale=0.5]{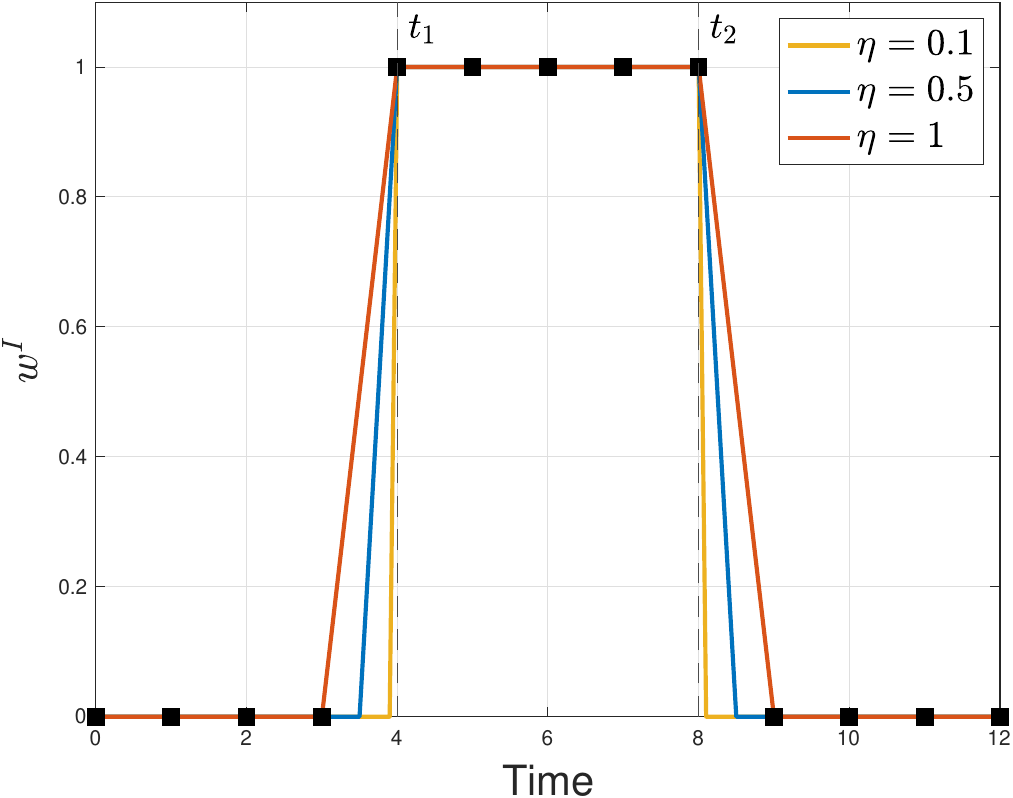}
    \caption{An example of time function. The time interval is $I=[4,8]$. The length of the signal is $13$. The time function with $\eta=0.1,0.5,1$ is shown in yellow, blue, and red, respectively.}
    \label{fig:time function}
\end{figure}

Since STL formulas are defined recursively, a robustness vector is needed for subsequent operations. Therefore, both the input and output of the temporal module must be robustness vectors. However, for a formula $\varphi = \event_{[t_1, t_2]} \phi$, computing the robustness vector $\mathbf{r}^v(\mathbf{s},\varphi)$ at $t \in [0, l-1]$ requires values of $r(\mathbf{s}, \phi, t')$ with $t' \in [t+t_1, t+ t_2]$. Since $0\leq t_1 \leq t_2 \leq l-1$, it follows that the computation needs an input vector of robustness values of length $2l-1$. To resolve this, we introduce a technique called \emph{robustness padding} to lengthen the input robustness vector, thereby facilitating the computation for a valid output robustness vector.

\begin{definition}[Robustness padding]
Given a robustness vector $\mathbf{r}^v(\mathbf{s},\phi)= \bmat{r(\mathbf{s},\phi,0), r(\mathbf{s},\phi,1), \cdots, r(\mathbf{s},\phi,l-1)}$, the robustness padding vector $\mathbf{r}^v_p(\mathbf{s},\phi)$ is defined as:
\begin{equation}
    \mathbf{r}^v_p(\mathbf{s},\phi) = [\underbrace{\rho,\cdots,\rho}_{l-1}],
\end{equation}
where
\begin{equation}
    \rho = \min_{t\in\left[0,l-1\right]} r(\mathbf{s},\phi,t).
    \label{eq:padding}
\end{equation}
The padded robustness vector $\mathbf{p}^v(\mathbf{s},\phi)$ is
\begin{equation}
\begin{aligned}
    \mathbf{p}^v(\mathbf{s},\phi) &= \bmat{\mathbf{r}^v(\mathbf{s},\phi),\mathbf{r}^v_p(\mathbf{s},\phi)}.
\end{aligned}
\end{equation}
\end{definition}

Given that the padding value $\rho$ represents the minimum of the robustness values, it is subsequently ignored through the $\max$ operation. The robustness padding in~\eqref{eq:padding} is applied prior to the $\max$ function, ensuring that subsequent robustness computations remain unaffected by the padding. See Example~\ref{padding example} for further clarification.

\begin{example}\label{padding example}
    Consider a 2nd-order STL specification $\phi_2 = \always_{[0,3]}\phi_1$ with $\phi_1=\event_{[1,4]}(s>0.1)$.
    Given a signal $\mathbf{s}$ of length~$8$, we first compute $\mathbf{r}^v(\mathbf{s},\mu)\in\real{1\times 8}$ for predicate $\mu:=s>0.1$, use the robustness padding vector $\mathbf{r}^v_p(\mathbf{s},\mu)\in\real{1\times 7}$ to extend the predicate vector, then we can compute the robustness vector $\mathbf{r}^v(\mathbf{s},\phi_1)$.
    Figure \ref{fig:padding} visualizes the procedure from the signal $\mathbf s$ to the robustness vector $\mathbf{r}^v(\mathbf{s},\phi_1)$. The robustness value of $\phi_2$ is
    \begin{equation}
        r(\mathbf{s},\phi_2,0)=-\max_{\mathbf{w}^I} (-\mathbf{r}^v(\mathbf{s},\phi_1))=0.5,
    \end{equation}
    where $\mathbf{w}^I = \bmat{1,1,1,1,0,0,0,0}$.
\end{example}

\begin{figure}[ht]
    \centering
    \includegraphics[scale=0.5]{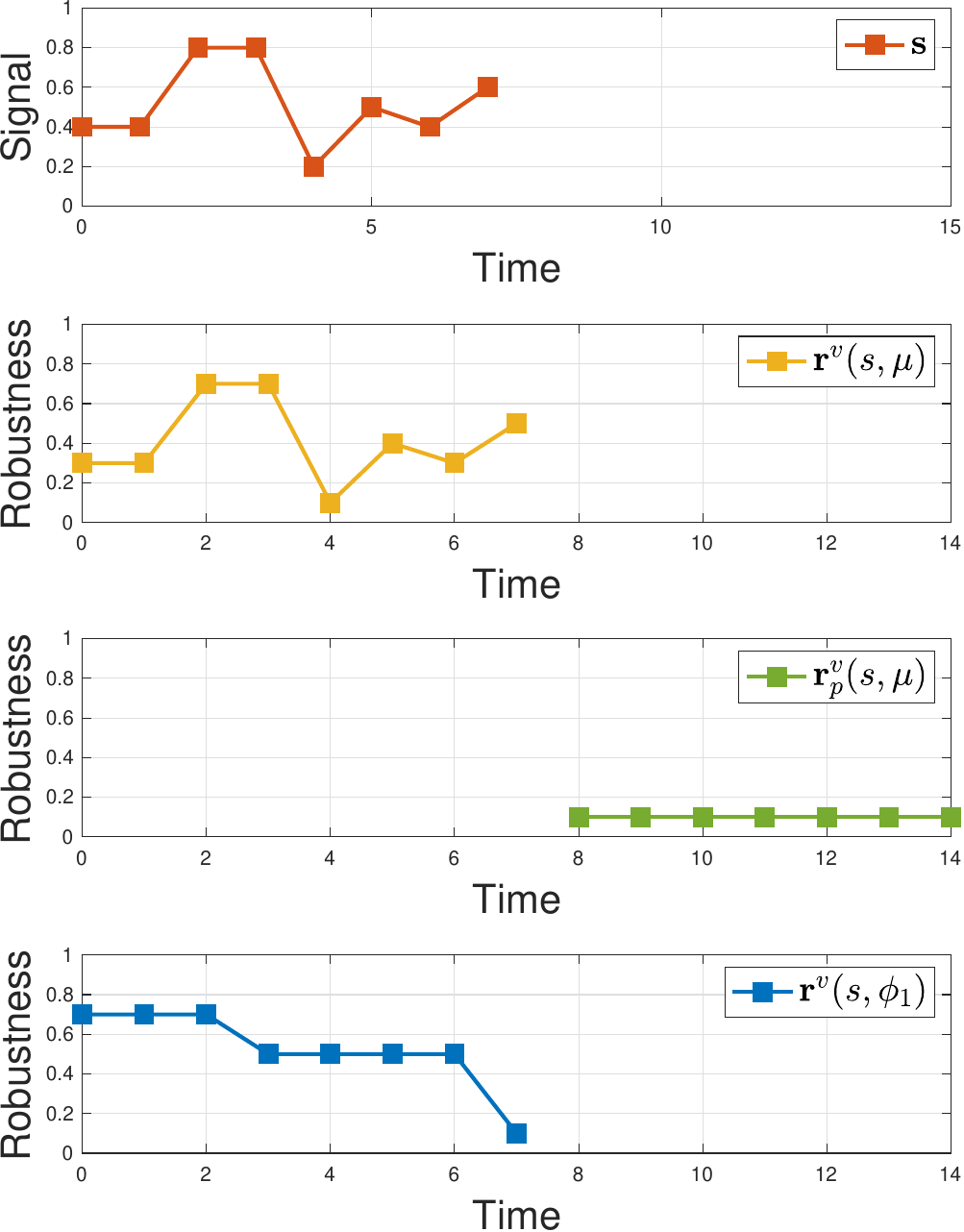}
    \caption{The procedure from signal $\mathbf s$ to the robustness vector $\mathbf{r}^v(\mathbf{s},\phi_1)$. The signal $\mathbf s$ is shown in red. The robustness vector of predicate $\mathbf{r}^v(\mathbf{s},\mu)$ is shown in yellow. The robustness padding vector of predicate $\mathbf{r}^v_p(\mathbf{s},\mu)$ is shown in green. The robustness vector $\mathbf{r}^v(\mathbf{s},\phi_1)$ is shown in blue. }
    \label{fig:padding}
\end{figure}

By integrating both the time function and robustness padding technique within the temporal module, we ensure consistent and accurate computation of robustness vectors. The overview of the temporal module structure is shown in Figure \ref{fig:temporal module}.

\begin{figure}[ht]
    \centering
    \includegraphics[scale=0.6]{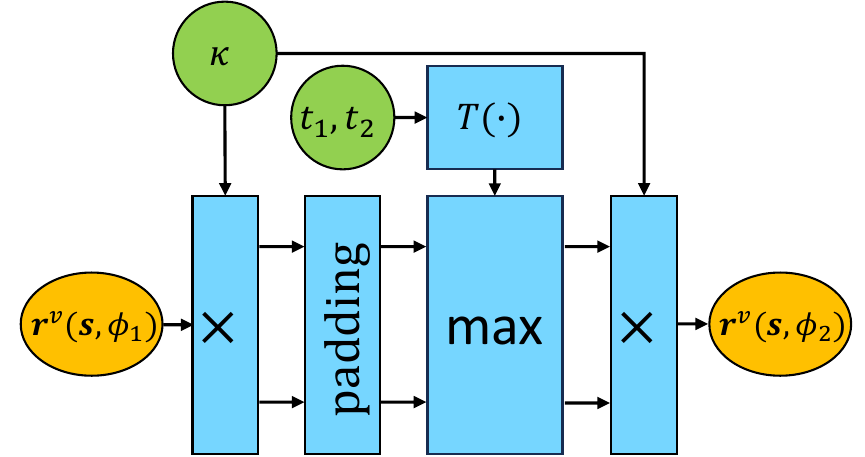}
    \caption{The structure of the temporal module.}
    \label{fig:temporal module}
\end{figure}

\subsection{Boolean Module}\label{sec:Boolean}
The Boolean module processes robustness vectors of multiple subformulas $\phi_1,\phi_2,\cdots,\phi_n$ arranged in a matrix format:
\begin{equation}
\bmat{\mathbf{r}^v(\mathbf{s},\phi_1),&\mathbf{r}^v(\mathbf{s},\phi_2),\cdots,\mathbf{r}^v(\mathbf{s},\phi_n)}.
\end{equation}
The output of the Boolean module is a robustness vector $\mathbf{r}^v(\mathbf{s},\varphi) = \bmat{r(\mathbf{s},\varphi,0), r(\mathbf{s},\varphi,1), \cdots, r(\mathbf{s},\varphi,l-1)}$, where $\varphi$ takes the Boolean operation of $\phi_1,\cdots,\phi_n$ based on $\mathbf{w}^b$.
This binary vector $\mathbf{w}^b$ determines the inclusion or exclusion of each subformula in the Boolean operation, allowing for flexible combinations. The structure of the Boolean module follows the computation described in \eqref{eq:variable}. Figure \ref{fig:boolean module} visually illustrates the structure and operation of the Boolean module within the framework.

\begin{figure}[ht]
    \centering
    \includegraphics[scale=0.6]{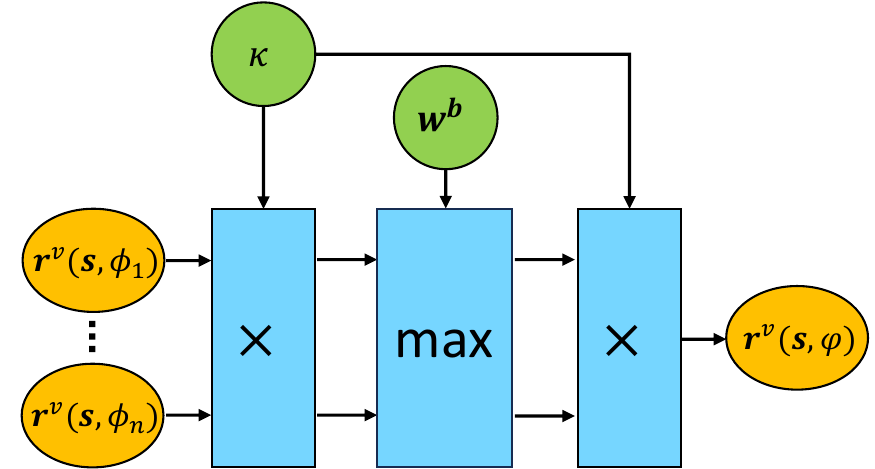}
    \caption{The structure of the Boolean module.}
    \label{fig:boolean module}
\end{figure}

Similarly to the learning approach for $\kappa$, we extend the concept to the binary vector $\mathbf{w}^b=\bmat{w^b_1,w^b_2,\cdots,w^b_n}\in\{0,1\}^n$. We consider $w^b_i$ as a binary variable sampled from a Bernoulli distribution with probability $p^b_i$ through maximum likelihood draw:
\begin{equation}
    w^b_i \sim \bernoulli_{ML,\cX_{\mathbf{w}^b}}(clip(p^b_i)),
\end{equation}
where $\cX_{\mathbf{w}^b}=\{1,0\}$ with $X_0=1$ and $X_1=0$.

\section{Max Approximation Methods}
In this section, we introduce methods for approximating the max operation in \eqref{eq:variable}. The $\max$ operation in \eqref{eq:variable} often result in numerous zero gradients during backpropagation, causing gradient vanishing problem, which significantly slows down or halts the training progress of neural networks. To address this issue, we propose two types of approximation methods that are highly adaptable to the training of temporal logic-based neural networks.

\subsection{Desired Properties}
The max approximation methods for TLINet need to possess certain properties. We evaluate these properties from both a learning-based perspective and their suitability for STL~\cite{varnai2020robustness}. 
First, it is crucial for these methods to enable the utilization of gradient-based techniques.

\begin{property}[Differentiable Almost Everywhere]\label{property:diff}
A function is differentiable almost everywhere if it is differentiable everywhere except on a set of measure zero~\cite{Halmos1974-HALMTQ}.
\end{property}

\begin{property}[Gradient Stability]\label{property:gradient}
A function exhibits gradient stability if it does not suffer from gradient vanishing or exploding problems.
\end{property}

The approximation methods must satisfy Property \ref{property:diff} and \ref{property:gradient} to maintain a meaningful gradient flow, allowing for stable and effective optimization for neural networks.

Next, these methods yield robustness values based on \eqref{eq:variable}. The robustness value's sign must explicitly convey whether it satisfies the corresponding STL specification. Hence, the assurance of soundness is crucial for TLINet.

\begin{property}[Soundness]\label{property:soundness}
Let $M(\mathbf{x},\mathbf{w})$ denote a function for computing the maximum of $\mathbf{x}$ given $\mathbf{w}$. We say $M$ is sound if
\begin{equation}
    \begin{aligned}
        \max_{\mathbf{w}}(\mathbf{x})> 0 \iff M(\mathbf{x},\mathbf{w})> 0,\\
        \max_{\mathbf{w}}(\mathbf{x})\leq 0 \iff M(\mathbf{x},\mathbf{w})\leq 0.
    \end{aligned}
\end{equation}
\end{property}

\subsection{Softmax}
A general approximation method for STL max operation $\displaystyle\max_{\mathbf{w}}(\mathbf{x})$ is the softmax function~\cite{yan2021stone} defined as:
\begin{equation}\label{softmax}
\begin{aligned}
    s(\mathbf{x},\mathbf{w})=\frac{\sum_{i=1}^{n} x_i w_i e^{\beta x_i}}{\sum_{i=1}^{n} w_i e^{\beta x_i}}= \sum_{i=1}^{n} x_i q^s_i,
\end{aligned}
\end{equation}
where $\mathbf{x}\in\real{n}$ is the input vector, $\mathbf{w}\in\{0,1\}^n$ is the time vector or Boolean vector, $\beta\in\real{}_{> 0}$ is a scaling parameter.
\begin{proposition}\label{prop:limit} Let $\cI\subset \{1,\ldots,N\}$ be a subset of indices, and $\bar{\cI}$ be its complement in $\{1,\ldots,N\}$. If we keep the values of $x_\cI$ fixed and let the values of $x_{\bar{\cI}}$ go to $-\infty$, the corresponding weights $q^s_{\bar{\cI}}$ will go to zero, i.e., 
\begin{equation}
    \lim_{x_{\bar{\cI}}\to-\infty} q^s_{\bar{\cI}}=0.
\end{equation}
\end{proposition}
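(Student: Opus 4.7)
The plan is to work directly with the explicit expression for the softmax weights, namely
$$q^s_i = \frac{w_i e^{\beta x_i}}{\sum_{j=1}^{n} w_j e^{\beta x_j}},$$
and exploit the fact that keeping $x_\cI$ fixed splits the normalizing denominator into a constant piece plus a vanishing piece. Under the natural assumption that at least one index in $\cI$ has $w_j = 1$ (otherwise the softmax degenerates in the limit and the statement is vacuous), the proof is essentially a one-line squeeze argument.

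Concretely, I would first decompose the denominator as
$$\sum_{j=1}^{n} w_j e^{\beta x_j} = D_\cI + R_{\bar{\cI}}, \qquad D_\cI := \sum_{j \in \cI} w_j e^{\beta x_j}, \qquad R_{\bar{\cI}} := \sum_{j \in \bar{\cI}} w_j e^{\beta x_j},$$
noting that $D_\cI > 0$ is a fixed positive constant (since the values $x_\cI$ and the Boolean mask $\mathbf{w}$ are held fixed), while $R_{\bar{\cI}} \geq 0$. Then for each $i \in \bar{\cI}$,
$$0 \;\leq\; q^s_i \;=\; \frac{w_i e^{\beta x_i}}{D_\cI + R_{\bar{\cI}}} \;\leq\; \frac{e^{\beta x_i}}{D_\cI}.$$
Finally I would take the limit $x_i \to -\infty$ (for all $i \in \bar{\cI}$ simultaneously): since $\beta > 0$, the upper bound tends to $0$, so by the squeeze theorem $q^s_i \to 0$ for every $i \in \bar{\cI}$.

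There is no genuine obstacle here; the only thing to be careful about is the degenerate case in which $w_j = 0$ for all $j \in \cI$, where the bounding constant $D_\cI$ vanishes and the softmax is not well-defined (or is trivially uniform) in the limit. I would flag this as a mild assumption on $\cI$ and $\mathbf{w}$ rather than something requiring a separate argument, since in the intended use inside the temporal and Boolean modules this case does not arise (the active indices contributing meaningfully to the max are exactly those one wants to retain in $\cI$).
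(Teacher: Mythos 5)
Your proof is correct. Note that the paper itself states this proposition without providing any proof, so there is nothing to compare against; your squeeze argument --- lower-bounding the denominator by the fixed positive constant $D_\cI = \sum_{j\in\cI} w_j e^{\beta x_j}$ and letting the numerator $w_i e^{\beta x_i} \le e^{\beta x_i} \to 0$ for $i \in \bar{\cI}$ --- is the natural and complete argument. Your caveat is also well taken: if $w_j = 0$ for every $j \in \cI$, the claim genuinely fails (e.g., with all of $x_{\bar\cI}$ sent to $-\infty$ at the same rate the weights stay uniform over the active indices of $\bar\cI$), so the non-degeneracy assumption you flag is a real hypothesis that the paper's statement leaves implicit.
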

The softmax function takes the weighted sum of input values $x_i$'s with weights $q_i^s$'s. In this context, smaller input values correspond to smaller weights, reducing their contribution to the computed maximum. Such calculations have been employed as activation functions in temporal logic-based neural networks in~\cite{leung2021back,yan2021stone}.
Nevertheless, the softmax function does not satisfy Property \ref{property:soundness}, i.e., the assurance of soundness is not guaranteed~\cite{maler2004monitoring}, introducing the possibility of inaccuracies in results and misinterpretation of STL specifications.

In the following sections, we introduce two approximation techniques for the max function that satisfy all the desired properties. Thus, these methods can be seamlessly employed in neural networks based on temporal logic.

\subsection{Sparse Softmax}
To improve the softmax function, we propose the \emph{sparse softmax function} that can guarantee the soundness property. Intuitively, \Cref{prop:limit} shows that when the values of $x_{\bar{\cI}}$ are sufficiently smaller than those in $x_{\cI}$, they will have a negligible influence on the result. Hence, we refer to it as the ``sparse" softmax function. The sparse softmax function $S(\mathbf{x},\mathbf{w})$ is defined through the following sequence of operations:
\begin{subequations}\label{sparsemax}
\begin{align}
    x_i' &= x_i w_i\label{sparsemax:sub1}\\
    x_{m} &= \begin{cases}
    \lvert \displaystyle\max_{i} (x_i')\rvert & \textrm{if } \lvert \displaystyle\max_i(x_i')\rvert \neq 0,\\
    1 & \textrm{otherwise.}
    \end{cases}\label{sparsemax:sub2}\\
    x_i'' &= \frac{h x_i'}{x_{m}}\label{sparsemax:sub3},\\
    q_i &= \frac{e^{\beta x_i''}}{\sum_i e^{\beta x_i''}}\label{sparsemax:sub4},\\
    S(\mathbf{x},\mathbf{w}) &= \frac{\sum_{i=1}^{n} x_i w_i q_i}{\sum_{i=1}^{n} w_i q_i} = \sum_{i=1}^{n}x_i q^S_i \label{sparsemax:sub5},
\end{align}
\end{subequations}
where $h\in\real{}_{>0}$ and $\beta\in\real{}_{>0}$ are hyperparameters. We scale $x_i$ to $x_i''$ through \eqref{sparsemax:sub1} to \eqref{sparsemax:sub3}, then transfer it into a probability distribution $q^S$ such that $\sum_i q_i^S = 1$. The sparse softmax function computes the weighted sum of $x_i$'s using weights $q_i^S$'s. 

\begin{proposition}\label{prop:sound}
The sparse softmax function is sound if the hyperparameters $\beta$, $h$ satisfy $h e^{\beta h}>\frac{(n-1) e^{-1}}{\beta}$.

The proof is shown in Appendix \ref{appendix:soundness}.
\end{proposition}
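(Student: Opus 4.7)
The plan is to prove soundness (Property~\ref{property:soundness}) by case analysis on the sign of $\max_{\mathbf{w}}(\mathbf{x})$. First I would simplify the expression for $S$: for inactive indices ($w_i=0$) the construction forces $x_i'=0$ and hence $x_i''=0$, so the corresponding factors $e^{\beta x_i''}=1$ cancel between numerator and denominator of the $w_i$-masked average in \eqref{sparsemax:sub5}. This reduces $S(\mathbf{x},\mathbf{w})$ to $\sum_{i: w_i=1} x_i e^{\beta x_i''} / \sum_{i: w_i=1} e^{\beta x_i''}$, whose denominator is strictly positive. Soundness therefore becomes equivalent to: the numerator $N := \sum_{i: w_i=1} x_i e^{\beta x_i''}$ shares the ``$>0$'' vs.\ ``$\leq 0$'' sign of $\max_{\mathbf{w}}(\mathbf{x})$.

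The ``$\leq 0$'' direction is immediate: if every active $x_i$ is non-positive, then $N$ is a sum of non-positive terms (the weights $e^{\beta x_i''}$ are strictly positive), so $N \leq 0$ and hence $S \leq 0$. The contrapositive delivers $S > 0 \Rightarrow \max_{\mathbf{w}}(\mathbf{x}) > 0$, which is one of the two implications required.

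The substantive direction is the ``$>0$'' case. Let $i^* \in \argmax_{i: w_i=1} x_i$, so $x_{i^*} > 0$; then \eqref{sparsemax:sub1}--\eqref{sparsemax:sub3} give $x_m = x_{i^*}$, $x_{i^*}'' = h$, and the identity $x_i = (x_{i^*}/h)\, x_i''$ for every active index. Substituting produces $N = (x_{i^*}/h) \sum_{i: w_i=1} f(x_i'')$ with $f(y) := y e^{\beta y}$, and since $x_{i^*}/h > 0$ it suffices to show that the $f$-sum is strictly positive. A short calculus argument shows that $f'(y)=(1+\beta y)\,e^{\beta y}$ has a unique zero at $y=-1/\beta$ and that this is a global minimum with value $-e^{-1}/\beta$ (as $y\to -\infty$, $f(y)\to 0^{-}$, and $f$ is increasing for $y>-1/\beta$). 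Isolating the dominant contribution $f(x_{i^*}'')=h e^{\beta h}$ and bounding the at most $n-1$ remaining active terms by this minimum yields $\sum_{i: w_i=1} f(x_i'') \geq h e^{\beta h} - (n-1)\,e^{-1}/\beta$, which is strictly positive under the stated hypothesis. Hence $N>0$, $S>0$, and the contrapositive closes the final implication.

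The main obstacle and the precise source of the hyperparameter condition is the tight worst-case bound on the negative contributions: the global minimum of $f(y)=y e^{\beta y}$ is the only mechanism by which competitor terms can drag $N$ below zero, and since at most $n-1$ indices compete with $i^*$, the dominant term $h e^{\beta h}$ must exceed $(n-1)\,e^{-1}/\beta$ exactly as the proposition requires.
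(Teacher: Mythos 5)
Your proof is correct and follows essentially the same route as the paper's: both reduce the sign of $S$ to the sign of a weighted sum, normalize the dominant term to $x''_{k}=h$ via the scaling in \eqref{sparsemax:sub1}--\eqref{sparsemax:sub3}, and bound the remaining at most $n-1$ terms by the global minimum $-e^{-1}/\beta$ of $y\mapsto ye^{\beta y}$, which is exactly where the hyperparameter condition enters. The only differences are organizational (you prove two implications and invoke contrapositives, while the paper writes out all four directions) and one cosmetic slip: the inactive terms drop out of \eqref{sparsemax:sub5} because $w_i=0$ masks them, not because the factors $e^{\beta x_i''}=1$ ``cancel.''
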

If the condition in Proposition \ref{prop:sound} is satisfied, the weights of some elements in $\mathbf{x}$ are small enough to be ignored compared to others. See Example \ref{example:sparse softmax} for a concrete example.

\begin{example}\label{example:sparse softmax}
Let a signal $\mathbf{s}=\bmat{2,1.1,0.9,0,-1}$. Consider the STL specification $\phi=\event_{[1,4]} (s>1)$. From the time interval $I=[1,4]$, the time vector $\mathbf{w}^I=\bmat{0,1,1,1,1}$. The robustness vector of predicate $\mu:=s>1$ is $\mathbf{r}^v(\mathbf{s},\mu)=\bmat{1,0.1,-0.1,-1,-2}$. The true robustness computed from \eqref{vstl:sub4} is $r(s,\phi,0)=0.1>0$. Choosing $\beta=1$, the robustness computed from softmax function is 
\begin{equation}
    s(\mathbf{r},\mathbf{w}^I)=\sum_{i=0}^{4} r_i q^s_i = -0.246<0
\end{equation}
where $\mathbf{q}^s = \bmat{0,0.440,0.360,0.146,0.054}$.

The robustness computed using our sparse softmax function is  
\begin{equation}
\begin{aligned}
\mathbf{r}' &= \bmat{0, 0.1, -0.1, -1, -2},\\
\mathbf{r}'' &= \bmat{0, 1, -1, -10, -20},\\
\mathbf{q}^S &= \bmat{0, 0.88, 0.12, 0, 0}\\
S(\mathbf{r},\mathbf{w}^I) &= \sum_{i=0}^{4} r_i q^S_i = 0.076>0,\\
\end{aligned}
\end{equation}
with $h=1$ to satisfy $he^{\beta h}>\frac{4e^{-1}}{\beta}$.
\end{example}

Compared to $\mathbf{q}^s$, some elements of $\mathbf{q}^S$ are zero, redistributing more weights onto other elements. Thus, the sparse softmax function can provide valid robustness, whereas the softmax function may fail to do so. In a classification problem, an algorithm using the softmax function will misclassify a signal $\mathbf{s}$ as violating $\phi$, even when $\mathbf{s}$ satisfies $\phi$.

\subsection{Averaged Max}
The max operation $y=\displaystyle\max_{\mathbf{w}}(\mathbf{x})$ in vSTL is a function where $\mathbf{x}\in\real{n}$, $\mathbf{w}\in\{0,1\}^n$, $w_i$ is an independent gating variable to determine whether $x_i$ should be included in the function $\max(\cdot)$ or not. Inspired by \cite{srinivas2017training}, $w_i$ can be interpreted as random variables governed by a Bernoulli distribution with probability $p_i$ such that
\begin{equation}
    \begin{aligned}
        P(w_i=1)&=p_i,\\
        P(w_i=0)&=1-p_i,
    \end{aligned}
\end{equation}
where $p_i$ indicates the probability of including $x_i$ in the function $\max(\cdot)$. In this case, $y$ becomes a random variable. We can compute the expectation of $y$ as:
\begin{equation}\label{eq:naive-E}
    \begin{aligned}
    \E(y) &= \E\big(\max_{\mathbf{w}}(\mathbf{x})\big)\\
    &= \E\big(\max(\bmat{x_1,...,x_n}\mid \bmat{w_1,...,w_n})\big)\\
    &= \sum_{\mathbf{w}\in\{0,1\}^n} \big(\max(\{x_i\}_{\{i\in[1,n]:w_i=1\}})\prod_{j=1}^{n} P(w_j)\big).
    \end{aligned}
\end{equation}
Theoretically, $\E(y)$ involves $2^n$ terms, making its computation resource-intensive. To address this, we propose a sorting trick to decrease the computational complexity of $\E(y)$.

We first sort $\bmat{x_1,\cdots,x_n}$ into $\bmat{x_1',\cdots,x_n'}$ such that $x_1'\geq x_2'\geq \cdots\geq x_n'$. Let $m:\cI\to \cI_s$ be the permutation such that $x_i=x'_{m(i)}$ for $i\in\cI$ and $m(i)\in\cI_s$. Note that $\max(x_{\cI})=\max(x'_{\cI_s})$ for any $\cI_s$. The \emph{averaged max function}, i.e., the expectation of $y$ can be written as:
\begin{equation}\label{eq:expand-E}
    \begin{aligned}
    \E(y) &= \E\big(\max(\bmat{x_1',\cdots,x_n'}|[w_1',...,w_n'])\big)\\
    &= x'_1 p'_1(1-p'_2)\cdots(1-p'_n) +\cdots +x'_1 p'_1p'_2\cdots p'_n\\
    &+ x'_2 (1-p'_1)p'_2\cdots(1-p'_n) +\cdots +x'_2(1-p'_1) p'_2\cdots p'_n\\
    &+\cdots+x'_n (1-p'_1)(1-p'_2)\cdots p'_n\\
    &= x'_1 p'_1+x'_2 p'_2(1-p'_1) +\cdots+x'_n p'_n \prod_{j=1}^{n-1}(1-p'_j)\\&=\sum_{i=1}^n x'_i p'_i \prod_{j=1}^{i-1} (1-p'_j),
    \end{aligned}
\end{equation}
where $p_i=p'_{m(i)}$, $w_i=w'_{m(i)}$. The computational complexity of $\E(y)$ becomes $O(n\log n)$. See Example \ref{example:averagedmax} for a concrete example.

\begin{example}\label{example:averagedmax}
Let the robustness vector be $\mathbf{r}=\bmat{r_0,r_1,r_2}$ and $r_2>r_1>r_0$, $\mathbf{w} = \bmat{w_0,w_1,w_2}$. The expectation of $y=\displaystyle\max_{\mathbf{w}}(\mathbf{x})$ is
{\allowdisplaybreaks
\begin{align*} 
&\E\big(\max(\bmat{r_0,r_1,r_2}|\bmat{w_0,w_1,w_2})\big)\\
=& \max(r_0)P(\mathbf{w}=[1,0,0]) + \max(r_1)P(\mathbf{w}=[0,1,0])\\
&+\max(r_0,r_1)P(\mathbf{w}=[1,1,0])+ \max(r_2)P(\mathbf{w}=[0,0,1])\\
&+ \max(r_0,r_2)P(\mathbf{w}=[1,0,1])\\
&+ \max(r_1,r_2)P(\mathbf{w}=[0,1,1])\\
&+\max(r_0,r_1,r_2)P(\mathbf{w}=[1,1,1])\\
=& r_0p_0(1-p_1)(1-p_2) + r_1(1-p_0)p_1(1-p_2)\\
&+ r_1p_0p_1(1-p_2)+ r_2(1-p_0)(1-p_1)p_2\\
&+ r_2p_0(1-p_1)p_2 + r_2(1-p_0)p_1p_2+ r_2p_0p_1p_2\\
=& r_2p_2 + r_1p_1(1-p_2) + r_0p_0(1-p_1)(1-p_2).
\end{align*}}
\end{example}

Note that when all ${p_i}$'s converge to either $0$ or $1$, the expected value $\E(y)$ equals to $y$, ensuring the soundness property. To accommodate this, we introduce a bi-modal regularizer from \cite{srinivas2017training}, which encourages values to approach either $0$ or $1$. The bi-modal regularizer for the averaged max function is
\begin{equation}\label{eq:avm reg}
    l_{avm} = \sum_{i=1}^{n} p_i(1-p_i).
\end{equation}

\subsubsection{Averaged Minmax}
If we consider $\kappa$ and $\vw$ as Bernoulli random variables, the output of \eqref{eq:variable} will also be a random variable. 
We therefore propose to use its expected value, defined as 
\begin{equation}\label{eq:variable averaged max}
        z = \mathbb{E}[\kappa\max_{\vw} (\kappa\mathbf{x})],
\end{equation}
where $\mathbf{x}\in\real{n}$, and the expectation is taken over $\mathbf{w}\in\{0,1\}^{n}$ and $\kappa\in\{1,-1\}$ distributed as Bernoulli variables. 
We have defined $\kappa$ in Section \ref{sec:variable} that:
\begin{equation}
\begin{aligned}
    P(\kappa=1) &= p_\kappa,\\
    P(\kappa=-1) &= 1-p_\kappa.
\end{aligned}
\end{equation}

The averaged minmax function, i.e., the expectation of $z$, becomes
\begin{equation}
    \begin{aligned}
    \E(z) &= p_{\kappa}(x'_1 p'_1+ ...+x'_{n} p'_{n} \prod_{j=1}^{n-1}(1-p'_j))\\
    &+(1-p_{\kappa})(x'_{n} p'_{n}+ ...+x'_1 p'_1 \prod_{j=2}^{n}(1-p'_j)),
    \end{aligned}
\end{equation}
where the first term is the expectation of taking the max operation while the second term is the expectation of taking the min operation.

Similar to the averaged max function, ${p_i}$'s and $p_k$ both need to converge to $0$ or $1$ to guarantee soundness property. Therefore, the regularizer for the averaged minmax function is
\begin{equation}\label{eq:kavm reg}
    l_{kavm} = p_k(1-p_k) + \sum_{i=1}^{n} p_i(1-p_i).
\end{equation}
The averaged minmax function excludes the use of the straight-through estimator, thus making the backpropagation smoother.

\subsection{Comparative Analysis of Sparse Softmax and Averaged Max}
We have introduced two newly developed approximation methods for the STL max operation: sparse softmax and averaged max. Both methods satisfy the desired properties, yet they are suited for different operators within STL.

The sparse softmax function is particularly suitable for temporal operators. This is because the output of the time function is inherently a binary vector, aligning well with the sparse softmax function. Conversely, employing the averaged max function for temporal operators requires learning the probability of weights across all time points. These weights must converge to binary values of $0$ or $1$, presenting challenges in training. Directly applying the output of the time function to the averaged max function can lead to issues with gradient stability, akin to employing a hard max function.

On the other hand, for Boolean operators, the averaged max function is preferred since it can naturally accept the probability of weights of subformulas, obviating the need for the straight-through estimator. Moreover, the averaged max learns to ``select" elements taken for the max operation while simultaneously approximating the maximum.

By understanding the distinct advantages and limitations of each method, practitioners can make informed decisions regarding their applications within STL-based neural networks.


\begin{figure*}
\begin{subfloat}[Initialize TLINet structure.]{
\begin{tikzpicture}
\begin{scope}[every node/.style={minimum width=2cm,draw}]
\node[draw=DodgerBlue3] (layer 1a) {$(a,b)$};
\node[draw=DodgerBlue3,right= 3mm of layer 1a] (layer 1b) {$(a,b)$};
\node[draw=orange,below=5mm of layer 1a] (layer 2a) {$(\kappa,t_1,t_2)$};
\node[draw=orange] (layer 2b) at (layer 2a-|layer 1b) {$(\kappa,t_1,t_2)$};
\node[draw=orange,below=5mm of layer 2a] (layer 3a) {$(\kappa,t_1,t_2)$};
\node[draw=orange] (layer 3b) at (layer 3a-|layer 2b) {$(\kappa,t_1,t_2)$};
\node[draw=black!30!green,below=5mm of layer 3a] (layer 4a) {$(\kappa,w)$};
\node[draw=black!30!green] (layer 4b) at (layer 4a-|layer 3b) {$(\kappa,w)$};
\node[draw=black!30!green,below of=layer4a] (layer 5) at ($(layer 4a)!0.5!(layer 4b)$) {$(\kappa,w)$};
\end{scope}

\newcommand{\ofst}{5mm}
\begin{scope}[every node/.style={dashed,draw,minimum width=5cm, minimum height=1cm,rounded corners=1mm}]
\node[DodgerBlue3] at ($(layer 1a)!0.5!(layer 1b)$)(frame 1) {};
\end{scope}
\begin{scope}[every note/.style={minimum width=2cm}]
\node[DodgerBlue3,left=0mm of frame 1,font=\fontsize{8}{8}\selectfont] {\textsf{Predicate Layer}};
\end{scope}
\begin{scope}[every node/.style={dashed,draw,minimum width=5cm, minimum height=1cm,rounded corners=1mm}]
\node[orange] at ($(layer 2a)!0.5!(layer 2b)$)(frame 2) {};
\end{scope}
\begin{scope}[every note/.style={minimum width=2cm}]
\node[orange,left=0mm of frame 2,font=\fontsize{8}{8}\selectfont] {\textsf{Temporal Layer}};
\end{scope}
\begin{scope}[every node/.style={dashed,draw,minimum width=5cm, minimum height=1cm,rounded corners=1mm}]
\node[orange] at ($(layer 3a)!0.5!(layer 3b)$)(frame 3) {};
\end{scope}
\begin{scope}[every note/.style={minimum width=2cm}]
\node[orange,left=0mm of frame 3,font=\fontsize{8}{8}\selectfont] {\textsf{Temporal Layer}};
\end{scope}
\begin{scope}[every node/.style={dashed,draw,minimum width=5cm, minimum height=1cm,rounded corners=1mm}]
\node[black!30!green] at ($(layer 4a)!0.5!(layer 4b)$)(frame 4) {};
\end{scope}
\begin{scope}[every note/.style={minimum width=2cm}]
\node[black!30!green,left=0mm of frame 4,font=\fontsize{8}{8}\selectfont] {\textsf{Boolean Layer}};
\end{scope}
\begin{scope}[every node/.style={dashed,draw,minimum width=5cm, minimum height=1cm,rounded corners=1mm}]
\node[black!30!green] at ($(layer 5)$)(frame 5) {};
\end{scope}
\begin{scope}[every note/.style={minimum width=2cm}]
\node[black!30!green,left=0mm of frame 5,font=\fontsize{8}{8}\selectfont] {\textsf{Boolean Layer}};
\end{scope}

\node[circle,fill=orange!70,minimum size=0.8cm] at ($(layer 1a)!0.5!(layer 1b)+(0,1)$) (s) {$\mathbf{s}$};
\node[circle,fill=orange!70,minimum size=0.8cm, below=5mm of layer 5] (r) {$\mathbf{r}$};

\begin{scope}[-latex]
\draw (s) -- (layer 1a);
\draw (s) -- (layer 1b);
\draw (layer 1a) -- (layer 2a);
\draw (layer 1b) -- (layer 2b);
\draw (layer 2a) -- (layer 3a);
\draw (layer 2b) -- (layer 3b);
\draw (layer 3a) -- (layer 4a);
\draw (layer 3b) -- (layer 4b);
\draw (layer 3a) -- (layer 4b);
\draw (layer 3b) -- (layer 4a);
\draw (layer 4a) -- (layer 5);
\draw (layer 4b) -- (layer 5);
\draw (layer 5) -- (r);
\end{scope}
\end{tikzpicture}
}
\end{subfloat}
\begin{subfloat}[Parameters of TLINet.]{
\begin{tikzpicture}
\begin{scope}[every node/.style={minimum width=2cm,draw}]
\node[draw=DodgerBlue3] (layer 1a) {$(1.0,0.9)$};
\node[draw=DodgerBlue3,right= 3mm of layer 1a] (layer 1b) {$(-1.0,0.7)$};
\node[draw=orange,below=5mm of layer 1a] (layer 2a) {$(-1,0,15)$};
\node[draw=orange] (layer 2b) at (layer 2a-|layer 1b) {$(1,3,7)$};
\node[draw=orange,below=5mm of layer 2a] (layer 3a) {$(1,5,10)$};
\node[draw=orange] (layer 3b) at (layer 3a-|layer 2b) {$(-1,0,10)$};
\node[draw=black!30!green,below=5mm of layer 3a] (layer 4a) {$(1,\bmat{1,0})$};
\node[draw=black!30!green] (layer 4b) at (layer 4a-|layer 3b) {$(-1,\bmat{0,1})$};
\node[draw=black!30!green,below of=layer4a] (layer 5) at ($(layer 4a)!0.5!(layer 4b)$) {$(-1,\bmat{0,1})$};
\end{scope}

\newcommand{\ofst}{5mm}
\begin{scope}[every node/.style={dashed,draw,minimum width=5cm, minimum height=1cm,rounded corners=1mm}]
\node[DodgerBlue3] at ($(layer 1a)!0.5!(layer 1b)$)(frame 1) {};
\end{scope}
\begin{scope}[every node/.style={dashed,draw,minimum width=5cm, minimum height=1cm,rounded corners=1mm}]
\node[orange] at ($(layer 2a)!0.5!(layer 2b)$)(frame 2) {};
\end{scope}
\begin{scope}[every node/.style={dashed,draw,minimum width=5cm, minimum height=1cm,rounded corners=1mm}]
\node[orange] at ($(layer 3a)!0.5!(layer 3b)$)(frame 3) {};
\end{scope}
\begin{scope}[every node/.style={dashed,draw,minimum width=5cm, minimum height=1cm,rounded corners=1mm}]
\node[black!30!green] at ($(layer 4a)!0.5!(layer 4b)$)(frame 4) {};
\end{scope}
\begin{scope}[every node/.style={dashed,draw,minimum width=5cm, minimum height=1cm,rounded corners=1mm}]
\node[black!30!green] at ($(layer 5)$)(frame 5) {};
\end{scope}

\node[circle,fill=orange!70,minimum size=0.8cm] at ($(layer 1a)!0.5!(layer 1b)+(0,1)$) (s) {$\mathbf{s}$};
\node[circle,fill=orange!70,minimum size=0.8cm, below=5mm of layer 5] (r) {$\mathbf{r}$};

\begin{scope}[-latex]
\draw (s) -- (layer 1a);
\draw (s) -- (layer 1b);
\draw (layer 1a) -- (layer 2a);
\draw (layer 1b) -- (layer 2b);
\draw (layer 2a) -- (layer 3a);
\draw (layer 2b) -- (layer 3b);
\draw (layer 3a) -- (layer 4a);
\draw (layer 3b) -- (layer 4b);
\draw (layer 3a) -- (layer 4b);
\draw (layer 3b) -- (layer 4a);
\draw (layer 4a) -- (layer 5);
\draw (layer 4b) -- (layer 5);
\draw (layer 5) -- (r);
\end{scope}
\end{tikzpicture}
}
\end{subfloat}
\begin{subfloat}[TLINet to an STL formula.]{
\begin{tikzpicture}
\begin{scope}[every node/.style={minimum width=2cm,draw}]
\node[draw=DodgerBlue3] (layer 1a) {$x>0.9$};
\node[draw=DodgerBlue3,right= 3mm of layer 1a] (layer 1b) {$x<-0.7$};
\node[draw=orange,below=5mm of layer 1a] (layer 2a) {$\phi_{11}=\always_{[0,15]}$};
\node[draw=orange] (layer 2b) at (layer 2a-|layer 1b) {$\phi_{12}=\event_{[3,7]}$};
\node[draw=orange,below=5mm of layer 2a] (layer 3a) {$\phi_{21}=\event_{[5,10]}$};
\node[draw=orange] (layer 3b) at (layer 3a-|layer 2b) {$\phi_{22}=\always_{[0,10]}$};
\node[draw=black!30!green,below=5mm of layer 3a] (layer 4a) {$\psi_1=\phi_{21}\phi_{11}$};
\node[draw=black!30!green] (layer 4b) at (layer 4a-|layer 3b) {$\psi_2=\phi_{22}\phi_{12}$};
\node[draw=black!30!green,below of=layer4a] (layer 5) at ($(layer 4a)!0.5!(layer 4b)$) {$\psi_2$};
\end{scope}

\newcommand{\ofst}{5mm}
\begin{scope}[every node/.style={dashed,draw, minimum width=5cm, minimum height=1cm,rounded corners=1mm}]
\node[DodgerBlue3] at ($(layer 1a)!0.5!(layer 1b)$)(frame 1) {};
\end{scope}
\begin{scope}[every node/.style={dashed,draw,minimum width=5cm, minimum height=1cm,rounded corners=1mm}]
\node[orange] at ($(layer 2a)!0.5!(layer 2b)$)(frame 2) {};
\end{scope}
\begin{scope}[every node/.style={dashed,draw,minimum width=5cm, minimum height=1cm,rounded corners=1mm}]
\node[orange] at ($(layer 3a)!0.5!(layer 3b)$)(frame 3) {};
\end{scope}
\begin{scope}[every node/.style={dashed,draw,minimum width=5cm, minimum height=1cm,rounded corners=1mm}]
\node[black!30!green] at ($(layer 4a)!0.5!(layer 4b)$)(frame 4) {};
\end{scope}
\begin{scope}[every node/.style={dashed,draw,minimum width=5cm, minimum height=1cm,rounded corners=1mm}]
\node[black!30!green] at ($(layer 5)$)(frame 5) {};
\end{scope}

\node[circle,fill=orange!70,minimum size=0.8cm] at ($(layer 1a)!0.5!(layer 1b)+(0,1)$) (s) {$\mathbf{s}$};
\node[circle,fill=orange!70,minimum size=0.8cm, below=5mm of layer 5] (r) {$\mathbf{r}$};

\begin{scope}[-latex]
\draw (s) -- (layer 1a);
\draw (s) -- (layer 1b);
\draw (layer 1a) -- (layer 2a);
\draw (layer 1b) -- (layer 2b);
\draw (layer 2a) -- (layer 3a);
\draw (layer 2b) -- (layer 3b);
\draw (layer 3a) -- (layer 4a);
\draw (layer 3b) -- (layer 4b);
\draw (layer 3a) -- (layer 4b);
\draw (layer 3b) -- (layer 4a);
\draw (layer 4a) -- (layer 5);
\draw (layer 4b) -- (layer 5);
\draw (layer 5) -- (r);
\end{scope}
\end{tikzpicture} 
}
\end{subfloat}
\caption{An example of TLINet and how it can be transferred to an STL formula from learning parameters.}
\label{fig:NN}
\end{figure*}
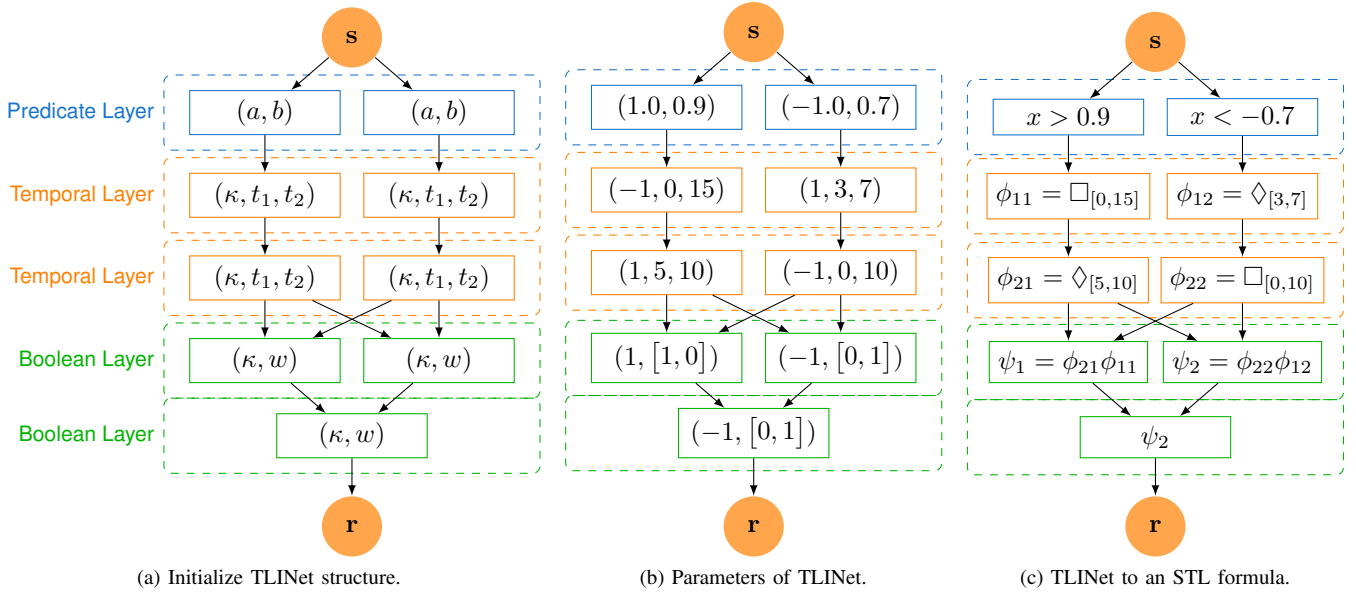

\section{Learning STL formulas with TLINet}
In this section, we introduce in detail how to translate TLINet to an STL formula and the learning process of the TLINet.

\subsection{TLINet as an STL formula}
TLINet, a neural network comprising layers with specific types of operators as defined in Sections \ref{sec:predicate}, \ref{sec:Boolean}, and \ref{sec:temporal}, can be translated into an STL formula by decoding its parameters. Despite the potentially intricate structure of TLINet, the resulting STL formula can be concise. This is because the neural network inherently learns to discard redundant information. Consider a neural network with $5$ layers, as depicted in Figure \ref{fig:NN}. In this example, the first layer consists of $2$ predicate modules, followed by two temporal layers each with $2$ modules, and two Boolean layers with $2$ and $1$ modules, respectively. Such a network can be succinctly translated into an STL formula $\psi = \always_{[0,10]}\event_{[3,7]}(x<-0.7)$.

To yield a succinct STL formula, we introduce a regularizer~$l_s$.
\begin{definition}[Sparsity]
    The sparsity of TLINet is given by $\sum_{j=1}^m (\|\mathbf{w}^b\|_1)_j$ where $\|\cdot\|_1$ represents the $L_1$ norm, $m$ is the number of Boolean modules in TLINet.
\end{definition}
\begin{definition}[Complexity]
    The complexity of an STL formula is the total number of subformulas.
\end{definition}
The sparsity of TLINet plays a crucial role in achieving a compact STL formula. The complexity of the resulting STL formula correlates positively with the sparsity of the neural network. A sparse TLINet, characterized by predominantly zero elements in $\mathbf{w}^b$, yields fewer subformulas, thus reducing formula complexity. Sparsity is encouraged through the regularizer $l_s$, which penalizes non-zero elements in $\mathbf{w}^b$ and contributes to the overall simplicity of TLINet. The regularizer is defined as:
\begin{equation}\label{eq:sparsity reg}
    l_s = \sum_{j=1}^m \sum_{i=1}^{n_j} w^{b,j}_{i},
\end{equation}
where $w^{b,j}_{i}$ denotes the $i^{th}$ element of vector $\mathbf{w}^{b,j}$, $n_j$ is the number of parameters of vector $\mathbf{w}^{b,j}$, and $m$ is the total number of Boolean modules of TLINet.

\subsection{Learning of TLINet}
The learning process in TLINet aims to identify an STL formula that accurately characterizes the observed data, distinguishing between desired (satisfying the STL formula) and undesired (violating it) system behaviors. This task is accomplished through the minimization of specific loss functions tailored for STL inference. Here, we introduce two such loss functions designed for this purpose. Desired and undesired data are typically labeled as 1 and -1, respectively.

\subsubsection{Exponential Loss}
The exponential loss is defined as:
\begin{equation}
    l = e^{-cr},
\end{equation}
where $c$ represents the label of the data, and $r$ denotes the robustness degree of the learned STL formula. This loss function penalizes misclassifications exponentially, making it particularly effective in boosting algorithms.

\subsubsection{Hinge Loss}
The hinge loss, as introduced in \cite{li2024multiclass} is defined as:
\begin{equation}
\begin{aligned}
    l &= \relu(\epsilon-c r) - \gamma \epsilon,
\end{aligned}
\end{equation}
where $c$ is the label of the data, $r$ is the robustness degree of the learned STL formula, $\epsilon$ is the margin, and $\gamma>0$ is a tuning parameter to control the compromise between maximizing the margin and classifying more data correctly~\cite{li2024multiclass}. This loss function explicitly encourages maximizing the margin between classes, potentially leading to better generalization and performance on unseen data.

To incorporate regularization, we augment the loss function with terms introduced in previous sections:
\begin{equation}
    L = l + \lambda_1 l_{s} + \lambda_2 l_{avm} + \lambda_3 l_{kavm}
\end{equation}
where $\lambda_i\in\real{}$, $i\in\{1,2,3\}$; $l_s$ is the sparsity regularizer defined in \eqref{eq:sparsity reg}; $l_{avm}$ is the regularizer for averaged max defined in \eqref{eq:avm reg}; $l_{kavm}$ is the regularizer for averaged minmax function defined in \eqref{eq:kavm reg}. Note that $\lambda_i$ can be zero if the corresponding regularizer is not needed. For example, $\lambda_3=0$ if the averaged minmax function is not used in the neural network. These regularization terms are included to control model complexity and enhance generalization performance.

Given the differentiability of TLINet, state-of-the-art automatic differentiation tools like PyTorch~\cite{paszke2017automatic} are suitable for its implementation. This enables efficient parallelized computation and leverages GPU resources for accelerated training. Additionally, as signals are independent, batch processing techniques can be employed, allowing for efficient scalability, especially for large datasets.

\section{CASE STUDIES}
In this section, we implement several case studies to illustrate the advantages of TLINet. In the first case study, we demonstrate its ability to produce compact STL formulas regardless of the structural complexity. Additionally, we illustrate its computational efficiency compared to other inference methods. In the second case study, we showcase the capability of TLINet to extract various signal features. Lastly, we emphasize its proficiency in capturing complex temporal information within signals through the third case study. We formalize the descriptions of signal features using STL formulas. These STL formulas are inferred through binary classification, where signals with positive labels satisfy the STL formula, while those with negative labels violate it~\cite{bombara2021offline,aasi2022classification}. The neural network TLINet is implemented using PyTorch. The experiments were implemented using a 4.2 GHz 64-core AMD Ryzen CPU and an NVIDIA RTX A5000 GPU.

\subsection{Naval Surveillance Scenario}
In this example, we construct TLINet with different structures and demonstrate their ability to learn concise STL formulas that express signal features in clear mathematical expressions. By comparing TLINet against other methodologies, we highlight its interpretability and efficiency.

We utilize a dataset related to a naval surveillance scenario \cite{aasi2022classification}, as illustrated in Figure \ref{fig:naval}. This dataset contains $1000$ trajectories of length $60$ for each class. Normal behaviors, labeled as $+1$, involve vessels approaching from the sea, navigating through the passage between a peninsula and an island, and proceeding toward the port. Anomalous behaviors, labeled as $-1$, include deviations towards the island or initial adherence to a normal track before returning to the open sea. Our goal is to train the TLINet to learn an STL formula serving as a binary classifier to distinguish desired behaviors from undesirable ones.

\begin{figure}[ht]
    \centering
    \includegraphics[scale=0.4]{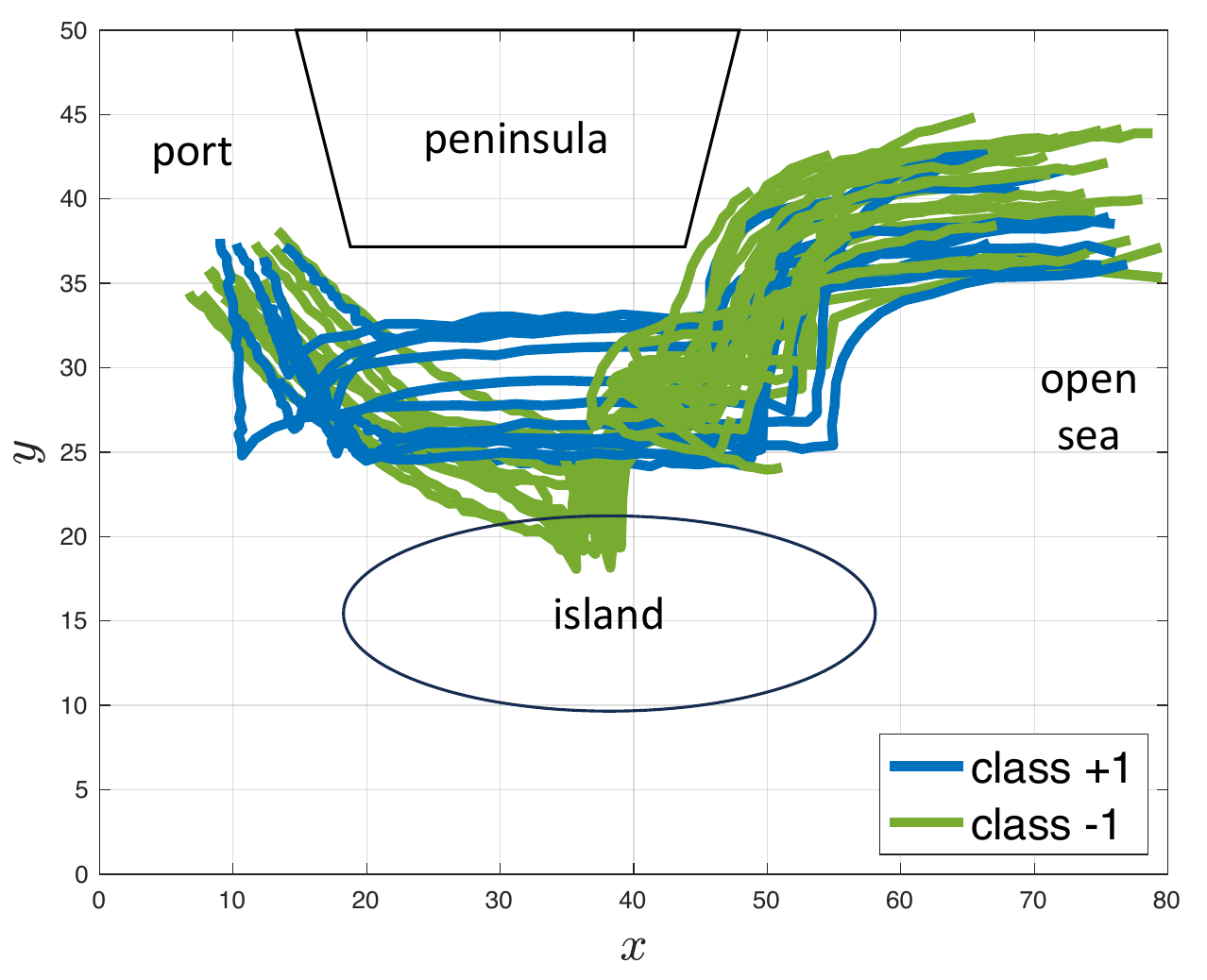}
    \caption{Trajectory examples in a naval surveillance scenario. Trajectories of vessels behaving normally (positive class) are shown in green. The blue trajectories represent the anomalous behaviors (negative class).}
    \label{fig:naval}
\end{figure}

We build three neural networks based on TLINet with increasing structural complexity, denoted as TLINet-1, TLINet-2 and TLINet-3, as illustrated in Figure \ref{fig:naval structure}. We record the training time and the misclassification rate in Table~\ref{table:navalstl}.
All three TLINets achieve an MCR of $0.00$, surpassing the MCR of Boosted Concise Decision Tree (BCDT)~\cite{aasi2022classification}, the decision tree-based framework (DF)~\cite{bombara2021offline} and directed acyclic graph (DAG)~\cite{kong2016temporal}. Moreover, by leveraging the gradient-based methods, our neural network reduces computation time compared to other inference methods, namely BCDT, DF, and DAG. The Long Short-Term Memory (LSTM) network can achieve $0.00$ MCR in a shorter time, but it lacks the ability to describe signal features.

Despite the increasing structural complexity of three TLINets, the resulting STL formulas exhibit comparably concise and uniform forms. The first term can be interpreted as ``the vessel eventually reaches the port" and the second term can be interpreted as ``the vessel always does not reach the island". Thus, these STL formulas are all able to express the signal features. The STL formulas obtained by other classification methods have more complicated structures and involve more terms compared to ours. 
Note that we use the data from \cite{aasi2022classification} which is downsampled from \cite{bombara2021offline} and \cite{kong2016temporal}, accordingly, we adjust the time interval of formulas from \cite{bombara2021offline} and \cite{kong2016temporal} in Table \ref{table:navalstl}  to maintain consistency across the formulas.

\begin{figure}[ht]
\centering
\begin{subfloat}[TLINet-1]{
\centering
\begin{tikzpicture}
\begin{scope}[every node/.style={dashed,draw,minimum width=2cm,,rounded corners=1mm}]
\node[draw=DodgerBlue3,font=\fontsize{9}{9}\selectfont] (layer 1a) {\textsf{P}};
\node[draw=orange,below=3mm of layer 1a,font=\fontsize{9}{9}\selectfont] (layer 2a) {\textsf{T}};
\node[draw=black!30!green,below=3mm of layer 2a,font=\fontsize{9}{9}\selectfont] (layer 3) {\textsf{B}};
\end{scope}

\node[circle,fill=orange!70,minimum size=0.6cm] at ($(layer 1a)+(0,0.9)$) (s) {$\mathbf{s}$};
\node[circle,fill=orange!70,minimum size=0.6cm, below=3mm of layer 3] (r) {$\mathbf{r}$};

\begin{scope}[-latex]
\draw (s) -- (layer 1a);
\draw (layer 1a) -- (layer 2a);
\draw (layer 2a) -- (layer 3);
\draw (layer 3) -- (r);
\end{scope}
\end{tikzpicture}}
\end{subfloat}
~
\begin{subfloat}[TLINet-2]{
\centering
\begin{tikzpicture}
\begin{scope}[every node/.style={dashed,draw,minimum width=2cm,,rounded corners=1mm}]
\node[draw=DodgerBlue3,font=\fontsize{9}{9}\selectfont] (layer 1a) {\textsf{P}};
\node[draw=black!30!green,below=3mm of layer 1a,font=\fontsize{9}{9}\selectfont] (layer 2a) {\textsf{B}};
\node[draw=orange,below=3mm of layer 2a,font=\fontsize{9}{9}\selectfont] (layer 3a) {\textsf{T}};
\node[draw=black!30!green,below=3mm of layer 3a,font=\fontsize{9}{9}\selectfont] (layer 4) {\textsf{B}};
\end{scope}

\node[circle,fill=orange!70,minimum size=0.6cm] at ($(layer 1a)+(0,0.9)$) (s) {$\mathbf{s}$};
\node[circle,fill=orange!70,minimum size=0.6cm, below=3mm of layer 4] (r) {$\mathbf{r}$};

\begin{scope}[-latex]
\draw (s) -- (layer 1a);
\draw (layer 1a) -- (layer 2a);
\draw (layer 2a) -- (layer 3a);
\draw (layer 3a) -- (layer 4);
\draw (layer 4) -- (r);
\end{scope}
\end{tikzpicture}}
\end{subfloat}
~
\begin{subfloat}[TLINet-3]{
\centering
\begin{tikzpicture}
\begin{scope}[every node/.style={dashed,draw,minimum width=2cm,,rounded corners=1mm}]
\node[draw=DodgerBlue3,font=\fontsize{9}{9}\selectfont] (layer 1a) {\textsf{P}};
\node[draw=black!30!green,below=3mm of layer 1a,font=\fontsize{9}{9}\selectfont] (layer 2a) {\textsf{B}};
\node[draw=orange,below=3mm of layer 2a,font=\fontsize{9}{9}\selectfont] (layer 3a) {\textsf{T}};
\node[draw=black!30!green,below=3mm of layer 3a,font=\fontsize{9}{9}\selectfont] (layer 4a) {\textsf{B}};
\node[draw=black!30!green,below=3mm of layer 4a,font=\fontsize{9}{9}\selectfont] (layer 5) {\textsf{B}};
\end{scope}

\node[circle,fill=orange!70,minimum size=0.6cm] at ($(layer 1a)+(0,0.9)$) (s) {$\mathbf{s}$};
\node[circle,fill=orange!70,minimum size=0.6cm, below=3mm of layer 5] (r) {$\mathbf{r}$};

\begin{scope}[-latex]
\draw (s) -- (layer 1a);
\draw (layer 1a) -- (layer 2a);
\draw (layer 2a) -- (layer 3a);
\draw (layer 3a) -- (layer 4a);
\draw (layer 4a) -- (layer 5);
\draw (layer 5) -- (r);
\end{scope}
\end{tikzpicture}}
\end{subfloat}
\caption{The structure of three TLINets. P represents the predicate layer; T represents the temporal layer; B represents the Boolean layer.}
\label{fig:naval structure}
\end{figure}
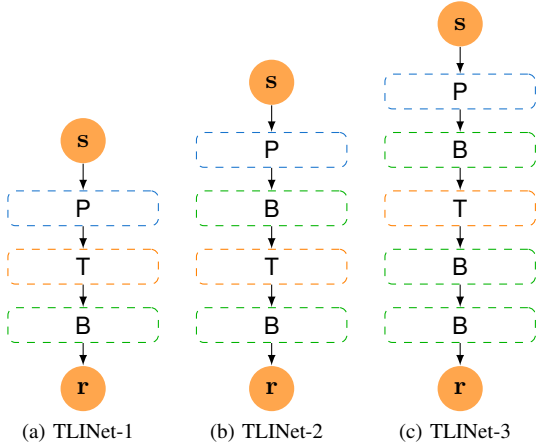
\begin{table*}[ht]
\caption{The results of TLINet and other classification methods.}
\label{table:navalstl}
\centering
{
\begin{tabular}{rccc}
  \toprule
  Method &MCR &Time(s) &STL formula\\
  \midrule
  TLINet-1 &0.0000 &27 &$\event_{[55,60]}(x<25.89) \land \always_{[0,16]}(y>23.77)$\\
  TLINet-2 &0.0000 &103 &$\event_{[58,60]}(x<31.00) \land \always_{[10,16]}(y>24.60)$\\
  TLINet-3 &0.0000 &129 &$ \event_{[49,55]}(x<37.05) \land \always_{[11,14]}(y>23.71)$\\
  LSTM &0.0000 &19 &$/$\\
  BCDT &0.0100 &1996 &$\event_{[28,53]}(x\leq 30.85)\land \always_{[2,26]}((y>21.31) \land (x>11.10))$\\
  DT &0.0195 &140 &$(\neg\event_{[38,53]}(x>20.1) \land \event_{[12,37]}(x>43.2))\lor
  (\event_{[38,53]}(x>20.1) \land \neg\event_{[20,59]}(y>32.2))\lor$\\
  & & &$(\event_{[38,53]}(x>20.1) \land \event_{[20,59]}(y>32.2) \land \always_{[14,60]}(y>30.1))$\\
  DAG &0.0885 &996 &$\event_{[0,33]}(\always_{[18,23]}(y>19.88) \land \always_{[9,30]}(x<34.08))$\\
  \bottomrule
  \end{tabular}
  }
\end{table*}

\subsection{Obstacle Avoidance}\label{sec:obstacle case}
In this example, we show that TLINet is able to capture various characteristics of signals.

Consider the motion planning problem illustrated in Figure \ref{fig:obs}, where the objective is to navigate a robot from the yellow star (position = $(0,0)$) to the target box ($C$) while avoiding the obstacle ($B$). During the data generation process, we consider conventional discrete-time vehicle dynamics and integrate the random sampling of control inputs as reference inputs to create diverse trajectories. 
We define a time-constrained reach-avoid navigation to assess if the robot safely reaches the target within a specified time frame. This serves as the criterion for categorizing the generated trajectories into positive or negative outcomes.
Additionally, the use of a kinodynamic motion planner~\cite{webb2013kinodynamic} allows us to generate more trajectories specifically designed to achieve defined objectives. We generate $1000$ trajectories, each of length $50$, representing scenarios where the objective is achieved, and $1000$ trajectories of the same length representing scenarios where either the target $C$ is not reached or a collision with the obstacle $B$ occurs.

We construct a five-layer TLINet shown in Figure~\ref{fig:obstacle structure} to classify the trajectories. We reach $0.00$ MCR with $122$s training time. The resulting STL formula is
\begin{equation}
\begin{aligned}
    \event_{[46,49]}(7.98<x<11.01 \land 8.00<y<10.55)\\
    \land \always_{[1,49]} (x<3.01\lor x>6.00\lor y<3.14\lor y>4.97).
\end{aligned}
\end{equation}

\begin{figure}[ht]
    \centering
    \includegraphics[scale=0.5]{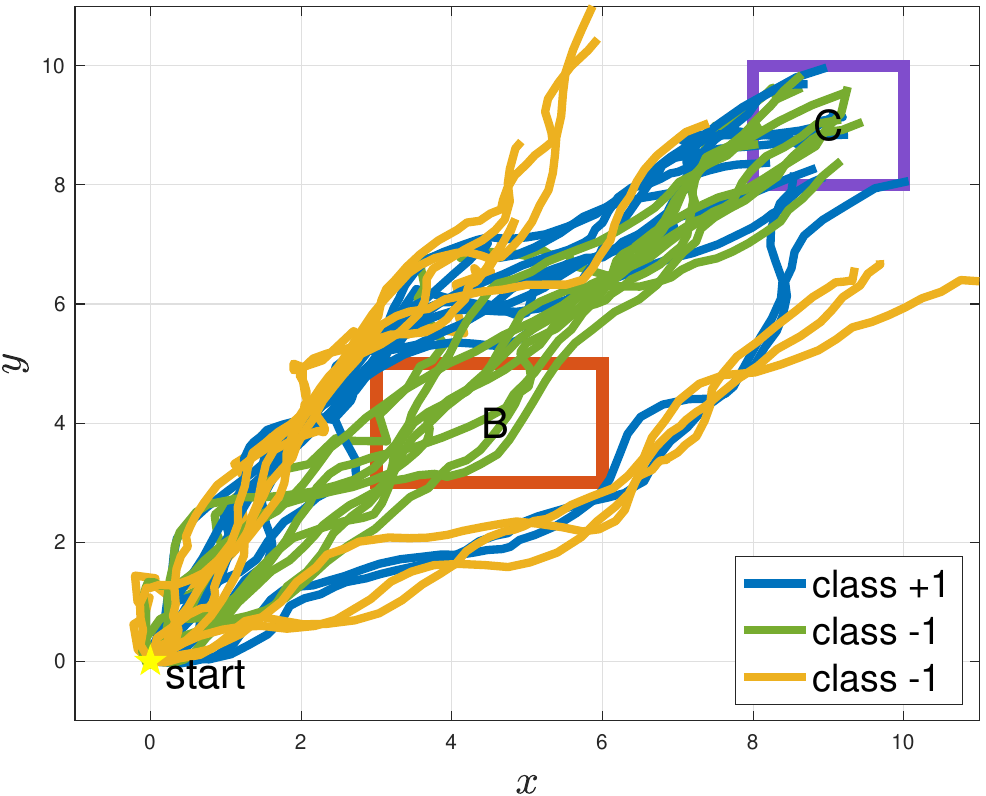}
    \caption{Examples of robot trajectories described in Section \ref{sec:obstacle case}. The blue trajectories (positive class) represent the robots that achieve the objective. The green trajectories (negative class) represent the robots that reach the target $C$ but collide with the obstacle $B$. The yellow trajectories (negative class) represent the robots that avoid the obstacle $B$ but fail to reach target $C$.}
    \label{fig:obs}
\end{figure}
\begin{figure}
\centering
\begin{tikzpicture}
\begin{scope}[every node/.style={dashed,draw,minimum width=5cm, minimum height=0.8cm,rounded corners=1mm}]
\node[draw=DodgerBlue3,font=\fontsize{9}{9}\selectfont] (layer 1a) {\textsf{Predicate Layer}};
\node[draw=black!30!green,below=5mm of layer 1a,font=\fontsize{9}{9}\selectfont] (layer 2a) {\textsf{Boolean Layer}};
\node[draw=orange,below=5mm of layer 2a,font=\fontsize{9}{9}\selectfont] (layer 3a) {\textsf{Temporal Layer}};
\node[draw=black!30!green,below=5mm of layer 3a,font=\fontsize{9}{9}\selectfont] (layer 4) {\textsf{Boolean Layer}};
\end{scope}

\node[circle,fill=orange!70,minimum size=0.8cm] at ($(layer 1a)+(0,1.3)$) (s) {$\mathbf{s}$};
\node[circle,fill=orange!70,minimum size=0.8cm, below=5mm of layer 4] (r) {$\mathbf{r}$};

\begin{scope}[-latex]
\draw (s) -- (layer 1a);
\draw (layer 1a) -- (layer 2a);
\draw (layer 2a) -- (layer 3a);
\draw (layer 3a) -- (layer 4);
\draw (layer 4) -- (r);
\end{scope}
\end{tikzpicture}
\caption{The structure of TLINet for classifying trajectories described in Section \ref{sec:obstacle case}.}
\label{fig:obstacle structure}
\end{figure}
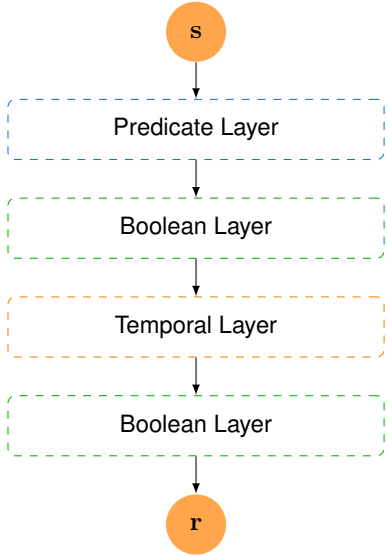

\subsection{Periodic Signal}\label{sec:periodic case}
In this example, we show the capability of TLINet to extract complex temporal information from signals.

Consider the periodic signals illustrated in Figure~\ref{fig:periodic}, the signals are a set of sinusoidal waves with random phase shifts. The dataset contains $1000$ samples with a length of $60$ for each class. The signals with positive labels ($+1$) have a period of $20$ time steps; the signals with negative labels ($-1$) have a period of $40$ time steps. All signals have amplitudes ranging from $1.0$ to $5.0$. Thus, it is crucial to extract temporal information, i.e., the period, for classification.

\begin{figure}[ht]
    \centering
    \includegraphics[scale=0.5]{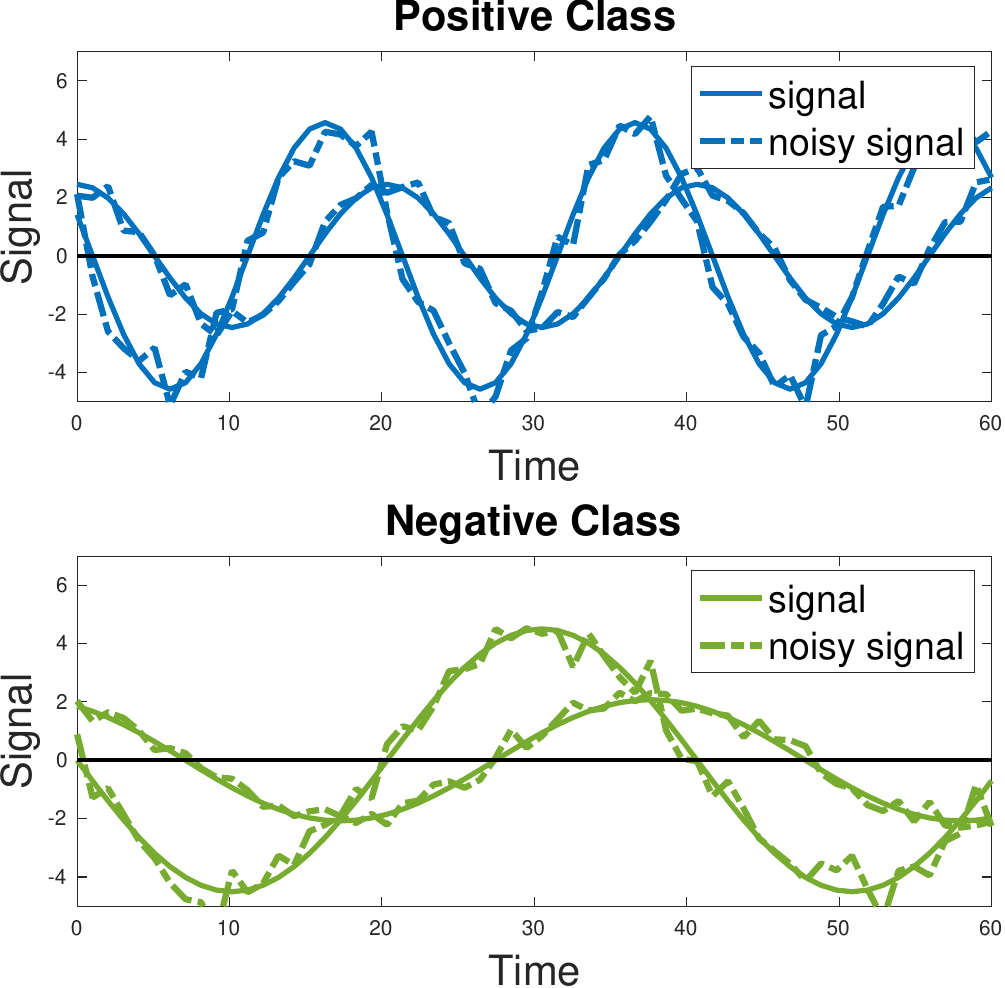}
    \caption{Examples of periodic signals and the noisy signals of two classes described in Section \ref{sec:periodic case}.}
    \label{fig:periodic}
\end{figure}
\begin{figure}
\centering
\begin{tikzpicture}
\begin{scope}[every node/.style={dashed,draw,minimum width=5cm, minimum height=0.8cm,rounded corners=1mm}]
\node[draw=DodgerBlue3,font=\fontsize{9}{9}\selectfont] (layer 1a) {\textsf{Predicate Layer}};
\node[draw=orange,below=5mm of layer 1a,font=\fontsize{9}{9}\selectfont] (layer 2a) {\textsf{Temporal Layer}};
\node[draw=orange,below=5mm of layer 2a,font=\fontsize{9}{9}\selectfont] (layer 3a) {\textsf{Temporal Layer}};
\node[draw=black!30!green,below=5mm of layer 3a,font=\fontsize{9}{9}\selectfont] (layer 4) {\textsf{Boolean Layer}};
\end{scope}

\node[circle,fill=orange!70,minimum size=0.8cm] at ($(layer 1a)+(0,1.3)$) (s) {$\mathbf{s}$};
\node[circle,fill=orange!70,minimum size=0.8cm, below=5mm of layer 4] (r) {$\mathbf{r}$};

\begin{scope}[-latex]
\draw (s) -- (layer 1a);
\draw (layer 1a) -- (layer 2a);
\draw (layer 2a) -- (layer 3a);
\draw (layer 3a) -- (layer 4);
\draw (layer 4) -- (r);
\end{scope}
\end{tikzpicture}
\caption{The structure of TLINet for classifying periodic signals.}
\label{fig:periodic structure}
\end{figure}
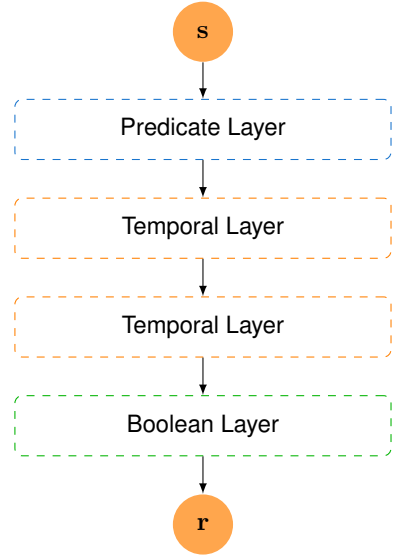
We employ a four-layer neural network, illustrated in Figure~\ref{fig:periodic structure}, featuring two temporal layers to capture the intricate periodic nature of the data. The results from a 5-fold cross-validation, as presented in Table~\ref{table:periodic cross-val}, show a mean $MCR$ of $0.00$ with an average training time of $229$ seconds. The resulting STL formula serves as a description of signals with positive labels. To assess the formula's robustness and generalization against adversarial perturbations, we introduce Gaussian noise to the signals. The noisy signals are also depicted in Figure~\ref{fig:periodic}. Subsequently, we apply the inferred STL formula to classify the noisy signals, and the resulting $MCR$ (noisy) is outlined in Table~\ref{table:periodic cross-val}.

Given the periodicity of the signals, diverse descriptions of features are shown in STL formulas for each cross-validation fold. The structure of STL formulas in Table \ref{table:periodic cross-val} can be summarized as:
\begin{equation}
    \always_{[t_1,t_2]}\event_{[t_3,t_4]} \mu(y).
\end{equation}
The nested structure implies that signals satisfy predicate $\mu(y)$ in a periodic manner, where the inner temporal operator $\event_{[t_3,t_4]} \mu(y)$ is always satisfied at each time step between $[t_1,t_2]$. Since positive-labeled signals consistently cross the zero line within the range from half of the cycle of signals with positive labels to half of the cycle of signals with negative labels, $\mu(y)$ should indicate $y$ pass through the zero line; the length of the time interval $[t_3,t_4]$ needs to be between $10$ to $20$, i.e. $10<t_4-t_3<20$. The time range of the STL formula needs to cover at least a complete cycle of signals with positive labels, ensuring that the temporal pattern is adequately captured. Thus, the condition $(t_2-t_1)+(t_4-t_3)>20$ is necessary to be fulfilled. Consequently, all obtained STL formulas effectively encapsulate the relevant features, facilitating successful classification. For instance, the STL formula from the first fold is $\always_{[0,31]}\event_{[8,20]}(y<-0.08)$, interpreted as ``for all time steps from $0$ to $31$, it is required that eventually, between time steps $8$ and $20$, the variable $y$ is less than $-0.08$." In simpler terms, this signifies that "from time step $8$ to $51$, the variable $y$ periodically drops below $-0.08$ within $12$ time steps."
\begin{table}[ht]
\caption{Cross validation results for classifying periodic signals.}
\label{table:periodic cross-val}
\centering
{
\begin{tabular}{c c p{0.7cm} c c}
  \toprule
  Fold \# &$MCR$ &$MCR$ (noisy) &Time(s) &STL formula\\
  \midrule
  1 &0.000 &0.010 &236 &$\always_{[0,31]}\event_{[8,20]}(y<-0.08)$\\
  2 &0.000 &0.030 &235 &$\always_{[10,45]}\event_{[5,16]}(y<0.03)$\\
  3 &0.000 &0.000 &225 &$\always_{[0,33]}\event_{[6,18]}(y<-0.06)$\\
  4 &0.000 &0.000 &228 &$\always_{[2,35]}\event_{[1,13]}(y>0.07) $\\
  5 &0.000 &0.005 &221 &$\always_{[0,30]}\event_{[4,16]}(y>0.05)$\\
  \midrule
  Mean &0.000 &0.009 &229 \\
  \bottomrule
  \end{tabular}
  }
\end{table}

\section{CONCLUSIONS}
In this paper, we introduce TLINet, a general framework designed for acquiring knowledge of STL formulas from data. TLINet is structured as a computational graph seamlessly incorporating differentiable off-the-shelf computation tools. Our approach is template-free while accommodating nested specifications. Additionally, the approximated robustness of the STL formula is differentiable and crafted with soundness guarantees.
Our experimental results demonstrate TLINet's state-of-the-art classification performance, highlighting its interpretability, compactness, and computational efficiency.
In the future, we will explore extending TLINet into unsupervised learning scenarios, where datasets lack labeling. This extension holds promise for broadening TLINet's real-world applicability.

\section*{APPENDIX}
\subsection{Proof for Proposition \ref{prop:sound}}\label{appendix:soundness}
The sparse softmax function is sound if and only if
\begin{subequations}
\begin{align}
    &\max_{\mathbf{w}}(\mathbf{x})> 0 \iff S(\mathbf{x},\mathbf{w})> 0\label{proofsound1},\\
    &\max_{\mathbf{w}}(\mathbf{x})\leq 0 \iff S(\mathbf{x},\mathbf{w})\leq 0\label{proofsound2}.
\end{align}
\end{subequations}
Note that $q_i$ is always positive and $w_i$ is always non-negative. If $\mathbf{w}$ is a zero vector, from \eqref{vstl:max}, the robustness is meaningless. Thus, $\mathbf{w}$ must be a non-zero vector, then the denominator in \eqref{sparsemax:sub5} is always positive.

First, we give the proof from LHS to RHS. For case \eqref{proofsound1}, let $k=\displaystyle\argmax_i x_iw_i$, from LHS, $x_k w_k>0$, thus $x_k>0$, $w_k=1$, $x_m=x_k$. From \eqref{sparsemax:sub3}, $x_k''=\frac{hx_k w_k}{x_{m}}=h$. Define a function $g(x)=xe^{\beta x}$, the minimum of $g(x)$ is $-\frac{e^{-1}}{\beta}$. 
\begin{equation}
    \begin{aligned}
        \sum_{i=1}^{n} x_i''q_i &= x_k'' q_k+\sum_{i=1, i\neq k}^{n} x_i'' q_i\\
        &=\frac{x_k'' e^{\beta x_k''}+\sum_{i=1, i\neq k}^{n} x_i'' e^{\beta x_i''}}{\sum_{i=1}^{n} e^{\beta x_i''}}\\
        &\geq \frac{h e^{\beta h}-(n-1)\frac{e^{-1}}{\beta}}{\sum_{i=1}^{n} e^{\beta x_i''}}
    \end{aligned}
\end{equation}
Since $\sum_{i=1}^{n} e^{\beta x_i''}>0$, from Proposition \ref{prop:sound}, $h e^{\beta h}-(n-1)\frac{e^{-1}}{\beta}>0$, then $\sum_{i=1}^{n} x_i''q_i> 0$. Since $\sum_{i=1}^{n} x_i''q_i = \sum_{i=1}^{n} \frac{hx_iw_iq_i}{x_m}> 0$ and $\frac{h}{x_m}>0$, $\sum_{i=1}^{n} x_iw_iq_i> 0$. From \eqref{sparsemax:sub5}, $S(\mathbf{x},\mathbf{w})> 0$ is derived.

For case \eqref{proofsound2}, from LHS, $\forall i\in[1,n]$, $x_i w_i \leq 0$. From \eqref{sparsemax:sub5}, $S(\mathbf{x},\mathbf{w})\leq 0$ is derived.

Next, we give the proof from RHS to LHS. For case \eqref{proofsound1}, from RHS, $\exists i\in[1,n]$, $x_i w_i > 0$, thus $\displaystyle\max_{\mathbf{w}}(\mathbf{x})> 0$.

For case \eqref{proofsound2}, from RHS, $\forall i\in[1,n]$, $x_i w_i \leq 0$. Thus, $\displaystyle\max_{\mathbf{w}}(\mathbf{x})\leq 0$ is satisfied.

\bibliographystyle{IEEEtran}
\bibliography{TAC}

\end{document}

\begin{definition}[Indicator function]
The indicator function $I_A(\mathbf{x}):\mathbf{x}\in\real{n}\rightarrow\mathbf{w}\in\{0,1\}^{n}$ is a function defined as:
\begin{equation}
    w_i = \begin{cases}
        1, & \text{if $x_i\in A$}\\
        0, & \text{if $x_i\notin A$}
    \end{cases}
\end{equation}
\end{definition}

\begin{definition}[Subset mapping function]
The subset mapping function $Subset_\mathbf{w}(\mathbf{x})$ maps a vector $:\mathbf{x}\in\real{n}$ to a set $A$ such that:
\begin{equation}
    \begin{cases}
        x_i \in A, & \text{if $w_i=1$}\\
        x_i \notin A, & \text{if $w_i=0$}
    \end{cases}
\end{equation}
\end{definition}

\begin{definition}[Subvector max function]
The subvector max function $Subvec_\mathbf{w}(\mathbf{x})$ maps a vector $:\mathbf{x}\in\real{n}$ to a vector $\mathbf{z}$ such that:
\begin{equation}
    Subvec(\mathbf{x}, \mathbf{w})=\max\left\{x_i \in \mathbf{x} | w_i=1, \forall w_i \in \mathbf{w}\right\}
\end{equation}
\end{definition}

\begin{figure*}
\begin{subfigure}
\begin{tikzpicture}
\begin{scope}[every node/.style={minimum width=2cm,draw}]
\node[draw=DodgerBlue3] (layer 1a) {$(a,b)$};
\node[draw=DodgerBlue3,right= 3mm of layer 1a] (layer 1b) {$(a,b)$};
\node[draw=orange,below=5mm of layer 1a] (layer 2a) {$(k,t_1,t_2)$};
\node[draw=orange] (layer 2b) at (layer 2a-|layer 1b) {$(k,t_1,t_2)$};
\node[draw=black!30!green,below=5mm of layer 2a] (layer 3a) {$(k,w)$};
\node[draw=black!30!green] (layer 3b) at (layer 3a-|layer 2b) {$(k,w)$};
\node[draw=black!30!green,below of=layer3a] (layer 4) at ($(layer 3a)!0.5!(layer 3b)$) {$(k,w)$};
\end{scope}

\newcommand{\ofst}{5mm}
\begin{scope}[every node/.style={dashed,draw,minimum width=5cm, minimum height=1cm,rounded corners=1mm}]
\node[DodgerBlue3] at ($(layer 1a)!0.5!(layer 1b)$)(frame 1) {};
\end{scope}
\begin{scope}[every note/.style={minimum width=2cm}]
\node[DodgerBlue3,left=0mm of frame 1,font=\fontsize{8}{8}\selectfont] {\textsf{Predicate Layer}};
\end{scope}
\begin{scope}[every node/.style={dashed,draw,minimum width=5cm, minimum height=1cm,rounded corners=1mm}]
\node[orange] at ($(layer 2a)!0.5!(layer 2b)$)(frame 2) {};
\end{scope}
\begin{scope}[every note/.style={minimum width=2cm}]
\node[orange,left=0mm of frame 2,font=\fontsize{8}{8}\selectfont] {\textsf{Temporal Layer}};
\end{scope}
\begin{scope}[every node/.style={dashed,draw,minimum width=5cm, minimum height=1cm,rounded corners=1mm}]
\node[black!30!green] at ($(layer 3a)!0.5!(layer 3b)$)(frame 3) {};
\end{scope}
\begin{scope}[every note/.style={minimum width=2cm}]
\node[black!30!green,left=0mm of frame 3,font=\fontsize{8}{8}\selectfont] {\textsf{Boolean Layer}};
\end{scope}
\begin{scope}[every node/.style={dashed,draw,minimum width=5cm, minimum height=1cm,rounded corners=1mm}]
\node[black!30!green] at ($(layer 4)$)(frame 4) {};
\end{scope}
\begin{scope}[every note/.style={minimum width=2cm}]
\node[black!30!green,left=0mm of frame 4,font=\fontsize{8}{8}\selectfont] {\textsf{Boolean Layer}};
\end{scope}

\node[circle,fill=orange!70,minimum size=0.8cm] at ($(layer 1a)!0.5!(layer 1b)+(0,1)$) (s) {$s$};
\node[circle,fill=orange!70,minimum size=0.8cm, below=5mm of layer 4] (r) {$r$};

\begin{scope}[-latex]
\draw (s) -- (layer 1a);
\draw (s) -- (layer 1b);
\draw (layer 1a) -- (layer 2a);
\draw (layer 1b) -- (layer 2b);
\draw (layer 2a) -- (layer 3a);
\draw (layer 2b) -- (layer 3b);
\draw (layer 2a) -- (layer 3b);
\draw (layer 2b) -- (layer 3a);
\draw (layer 3a) -- (layer 4);
\draw (layer 3b) -- (layer 4);
\draw (layer 4) -- (r);
\end{scope}
\end{tikzpicture} 
\end{subfigure}
~
\begin{subfigure}
\begin{tikzpicture}
\begin{scope}[every node/.style={minimum width=2cm,draw}]
\node[draw=DodgerBlue3] (layer 1a) {$(1.0,0.9)$};
\node[draw=DodgerBlue3,right=3mm of layer 1a] (layer 1b) {$(-1.0,0.7)$};
\node[draw=orange,below=5mm of layer 1a] (layer 2a) {$(-1,0,15)$};
\node[draw=orange] (layer 2b) at (layer 2a-|layer 1b) {$(1,3,7)$};
\node[draw=black!30!green,below=5mm of layer 2a] (layer 3a) {$(1,\bmat{1,1})$};
\node[draw=black!30!green] (layer 3b) at (layer 3a-|layer 2b) {$(-1,\bmat{0,1})$};
\node[draw=black!30!green,below of=layer3a] (layer 4) at ($(layer 3a)!0.5!(layer 3b)$) {$(-1,\bmat{0,1})$};
\end{scope}

\newcommand{\ofst}{5mm}
\begin{scope}[every node/.style={dashed,draw,minimum width=5cm, minimum height=1cm,rounded corners=1mm}]
\node[DodgerBlue3] at ($(layer 1a)!0.5!(layer 1b)$)(frame 1) {};
\end{scope}
\begin{scope}[every node/.style={dashed,draw,minimum width=5cm, minimum height=1cm,rounded corners=1mm}]
\node[orange] at ($(layer 2a)!0.5!(layer 2b)$)(frame 2) {};
\end{scope}
\begin{scope}[every node/.style={dashed,draw,minimum width=5cm, minimum height=1cm,rounded corners=1mm}]
\node[black!30!green] at ($(layer 3a)!0.5!(layer 3b)$)(frame 3) {};
\end{scope}
\begin{scope}[every node/.style={dashed,draw,minimum width=5cm, minimum height=1cm,rounded corners=1mm}]
\node[black!30!green] at ($(layer 4)$)(frame 4) {};
\end{scope}

\node[circle,fill=orange!70,minimum size=0.8cm] at ($(layer 1a)!0.5!(layer 1b)+(0,1)$) (s) {$s$};
\node[circle,fill=orange!70,minimum size=0.8cm, below=5mm of layer 4] (r) {$r$};

\begin{scope}[-latex]
\draw (s) -- (layer 1a);
\draw (s) -- (layer 1b);
\draw (layer 1a) -- (layer 2a);
\draw (layer 1b) -- (layer 2b);
\draw (layer 2a) -- (layer 3a);
\draw (layer 2b) -- (layer 3b);
\draw (layer 2a) -- (layer 3b);
\draw (layer 2b) -- (layer 3a);
\draw (layer 3a) -- (layer 4);
\draw (layer 3b) -- (layer 4);
\draw (layer 4) -- (r);
\end{scope}
\end{tikzpicture}
\end{subfigure}
~
\begin{subfigure}
\begin{tikzpicture}
\begin{scope}[every node/.style={minimum width=2cm,draw}]
\node[draw=DodgerBlue3] (layer 1a) {$x>0.9$};
\node[draw=DodgerBlue3,right=3mm of layer 1a] (layer 1b) {$x<-0.7$};
\node[draw=orange,below=5mm of layer 1a] (layer 2a) {$\phi_1=\always_{[0,15]}$};
\node[draw=orange] (layer 2b) at (layer 2a-|layer 1b) {$\phi_2=\event_{[3,7]}$};
\node[draw=black!30!green,below=5mm of layer 2a] (layer 3a) {$\psi_1=\phi_1\lor\phi_2$};
\node[draw=black!30!green] (layer 3b) at (layer 3a-|layer 2b) {$\psi_2=\phi_2$};
\node[draw=black!30!green,below of=layer3a] (layer 4) at ($(layer 3a)!0.5!(layer 3b)$) {$\psi_2$};
\end{scope}

\newcommand{\ofst}{5mm}
\begin{scope}[every node/.style={dashed,draw,minimum width=5cm, minimum height=1cm,rounded corners=1mm}]
\node[DodgerBlue3] at ($(layer 1a)!0.5!(layer 1b)$)(frame 1) {};
\end{scope}
\begin{scope}[every node/.style={dashed,draw,minimum width=5cm, minimum height=1cm,rounded corners=1mm}]
\node[orange] at ($(layer 2a)!0.5!(layer 2b)$)(frame 2) {};
\end{scope}
\begin{scope}[every node/.style={dashed,draw,minimum width=5cm, minimum height=1cm,rounded corners=1mm}]
\node[black!30!green] at ($(layer 3a)!0.5!(layer 3b)$)(frame 3) {};
\end{scope}
\begin{scope}[every node/.style={dashed,draw,minimum width=5cm, minimum height=1cm,rounded corners=1mm}]
\node[black!30!green] at ($(layer 4)$)(frame 4) {};
\end{scope}

\node[circle,fill=orange!70,minimum size=0.8cm] at ($(layer 1a)!0.5!(layer 1b)+(0,1)$) (s) {$s$};
\node[circle,fill=orange!70,minimum size=0.8cm, below=5mm of layer 4] (r) {$r$};

\begin{scope}[-latex]
\draw (s) -- (layer 1a);
\draw (s) -- (layer 1b);
\draw (layer 1a) -- (layer 2a);
\draw (layer 1b) -- (layer 2b);
\draw (layer 2a) -- (layer 3a);
\draw (layer 2b) -- (layer 3b);
\draw (layer 2a) -- (layer 3b);
\draw (layer 2b) -- (layer 3a);
\draw (layer 3a) -- (layer 4);
\draw (layer 3b) -- (layer 4);
\draw (layer 4) -- (r);
\end{scope}
\end{tikzpicture}
\end{subfigure}
\caption{An example of a 4-layer neural network. The learned STL formula is $\psi_2 = \event_{[3,7]}s<-0.7).$}
\label{fig:NN}
\end{figure*}

\begin{figure*}
\begin{subfigure}
\begin{tikzpicture}
\begin{scope}[every node/.style={minimum width=2cm,draw}]
\node[draw=DodgerBlue3] (layer 1a) {$(a,b)$};
\node[draw=DodgerBlue3,right= 3mm of layer 1a] (layer 1b) {$(a,b)$};
\node[draw=orange,below=5mm of layer 1a] (layer 2a) {$(\kappa,t_1,t_2)$};
\node[draw=orange] (layer 2b) at (layer 2a-|layer 1b) {$(\kappa,t_1,t_2)$};
\node[draw=orange,below=5mm of layer 2a] (layer 3a) {$(\kappa,t_1,t_2)$};
\node[draw=orange] (layer 3b) at (layer 3a-|layer 2b) {$(\kappa,t_1,t_2)$};
\node[draw=black!30!green,below=5mm of layer 3a] (layer 4a) {$(\kappa,w)$};
\node[draw=black!30!green] (layer 4b) at (layer 4a-|layer 3b) {$(\kappa,w)$};
\node[draw=black!30!green,below of=layer4a] (layer 5) at ($(layer 4a)!0.5!(layer 4b)$) {$(\kappa,w)$};
\end{scope}

\newcommand{\ofst}{5mm}
\begin{scope}[every node/.style={dashed,draw,minimum width=5cm, minimum height=1cm,rounded corners=1mm}]
\node[DodgerBlue3] at ($(layer 1a)!0.5!(layer 1b)$)(frame 1) {};
\end{scope}
\begin{scope}[every note/.style={minimum width=2cm}]
\node[DodgerBlue3,left=0mm of frame 1,font=\fontsize{8}{8}\selectfont] {\textsf{Predicate Layer}};
\end{scope}
\begin{scope}[every node/.style={dashed,draw,minimum width=5cm, minimum height=1cm,rounded corners=1mm}]
\node[orange] at ($(layer 2a)!0.5!(layer 2b)$)(frame 2) {};
\end{scope}
\begin{scope}[every note/.style={minimum width=2cm}]
\node[orange,left=0mm of frame 2,font=\fontsize{8}{8}\selectfont] {\textsf{Temporal Layer}};
\end{scope}
\begin{scope}[every node/.style={dashed,draw,minimum width=5cm, minimum height=1cm,rounded corners=1mm}]
\node[orange] at ($(layer 3a)!0.5!(layer 3b)$)(frame 3) {};
\end{scope}
\begin{scope}[every note/.style={minimum width=2cm}]
\node[orange,left=0mm of frame 3,font=\fontsize{8}{8}\selectfont] {\textsf{Temporal Layer}};
\end{scope}
\begin{scope}[every node/.style={dashed,draw,minimum width=5cm, minimum height=1cm,rounded corners=1mm}]
\node[black!30!green] at ($(layer 4a)!0.5!(layer 4b)$)(frame 4) {};
\end{scope}
\begin{scope}[every note/.style={minimum width=2cm}]
\node[black!30!green,left=0mm of frame 4,font=\fontsize{8}{8}\selectfont] {\textsf{Boolean Layer}};
\end{scope}
\begin{scope}[every node/.style={dashed,draw,minimum width=5cm, minimum height=1cm,rounded corners=1mm}]
\node[black!30!green] at ($(layer 5)$)(frame 5) {};
\end{scope}
\begin{scope}[every note/.style={minimum width=2cm}]
\node[black!30!green,left=0mm of frame 5,font=\fontsize{8}{8}\selectfont] {\textsf{Boolean Layer}};
\end{scope}

\node[circle,fill=orange!70,minimum size=0.8cm] at ($(layer 1a)!0.5!(layer 1b)+(0,1)$) (s) {$s$};
\node[circle,fill=orange!70,minimum size=0.8cm, below=5mm of layer 5] (r) {$r$};

\begin{scope}[-latex]
\draw (s) -- (layer 1a);
\draw (s) -- (layer 1b);
\draw (layer 1a) -- (layer 2a);
\draw (layer 1b) -- (layer 2b);
\draw (layer 2a) -- (layer 3a);
\draw (layer 2b) -- (layer 3b);
\draw (layer 3a) -- (layer 4a);
\draw (layer 3b) -- (layer 4b);
\draw (layer 3a) -- (layer 4b);
\draw (layer 3b) -- (layer 4a);
\draw (layer 4a) -- (layer 5);
\draw (layer 4b) -- (layer 5);
\draw (layer 5) -- (r);
\end{scope}
\end{tikzpicture} 
\end{subfigure}
~
\begin{subfigure}
\begin{tikzpicture}
\begin{scope}[every node/.style={minimum width=2cm,draw}]
\node[draw=DodgerBlue3] (layer 1a) {$(1.0,0.9)$};
\node[draw=DodgerBlue3,right= 3mm of layer 1a] (layer 1b) {$(-1.0,0.7)$};
\node[draw=orange,below=5mm of layer 1a] (layer 2a) {$(-1,0,15)$};
\node[draw=orange] (layer 2b) at (layer 2a-|layer 1b) {$(1,3,7)$};
\node[draw=orange,below=5mm of layer 2a] (layer 3a) {$(1,5,10)$};
\node[draw=orange] (layer 3b) at (layer 3a-|layer 2b) {$(-1,0,10)$};
\node[draw=black!30!green,below=5mm of layer 3a] (layer 4a) {$(1,\bmat{1,0})$};
\node[draw=black!30!green] (layer 4b) at (layer 4a-|layer 3b) {$(-1,\bmat{0,1})$};
\node[draw=black!30!green,below of=layer4a] (layer 5) at ($(layer 4a)!0.5!(layer 4b)$) {$(-1,\bmat{0,1})$};
\end{scope}

\newcommand{\ofst}{5mm}
\begin{scope}[every node/.style={dashed,draw,minimum width=5cm, minimum height=1cm,rounded corners=1mm}]
\node[DodgerBlue3] at ($(layer 1a)!0.5!(layer 1b)$)(frame 1) {};
\end{scope}
\begin{scope}[every node/.style={dashed,draw,minimum width=5cm, minimum height=1cm,rounded corners=1mm}]
\node[orange] at ($(layer 2a)!0.5!(layer 2b)$)(frame 2) {};
\end{scope}
\begin{scope}[every node/.style={dashed,draw,minimum width=5cm, minimum height=1cm,rounded corners=1mm}]
\node[orange] at ($(layer 3a)!0.5!(layer 3b)$)(frame 3) {};
\end{scope}
\begin{scope}[every node/.style={dashed,draw,minimum width=5cm, minimum height=1cm,rounded corners=1mm}]
\node[black!30!green] at ($(layer 4a)!0.5!(layer 4b)$)(frame 4) {};
\end{scope}
\begin{scope}[every node/.style={dashed,draw,minimum width=5cm, minimum height=1cm,rounded corners=1mm}]
\node[black!30!green] at ($(layer 5)$)(frame 5) {};
\end{scope}

\node[circle,fill=orange!70,minimum size=0.8cm] at ($(layer 1a)!0.5!(layer 1b)+(0,1)$) (s) {$s$};
\node[circle,fill=orange!70,minimum size=0.8cm, below=5mm of layer 5] (r) {$r$};

\begin{scope}[-latex]
\draw (s) -- (layer 1a);
\draw (s) -- (layer 1b);
\draw (layer 1a) -- (layer 2a);
\draw (layer 1b) -- (layer 2b);
\draw (layer 2a) -- (layer 3a);
\draw (layer 2b) -- (layer 3b);
\draw (layer 3a) -- (layer 4a);
\draw (layer 3b) -- (layer 4b);
\draw (layer 3a) -- (layer 4b);
\draw (layer 3b) -- (layer 4a);
\draw (layer 4a) -- (layer 5);
\draw (layer 4b) -- (layer 5);
\draw (layer 5) -- (r);
\end{scope}
\end{tikzpicture} 
\end{subfigure}
~
\begin{subfigure}
\begin{tikzpicture}
\begin{scope}[every node/.style={minimum width=2cm,draw}]
\node[draw=DodgerBlue3] (layer 1a) {$x>0.9$};
\node[draw=DodgerBlue3,right= 3mm of layer 1a] (layer 1b) {$x<-0.7$};
\node[draw=orange,below=5mm of layer 1a] (layer 2a) {$\phi_{11}=\always_{[0,15]}$};
\node[draw=orange] (layer 2b) at (layer 2a-|layer 1b) {$\phi_{12}=\event_{[3,7]}$};
\node[draw=orange,below=5mm of layer 2a] (layer 3a) {$\phi_{21}=\event_{[5,10]}$};
\node[draw=orange] (layer 3b) at (layer 3a-|layer 2b) {$\phi_{22}=\always_{[0,10]}$};
\node[draw=black!30!green,below=5mm of layer 3a] (layer 4a) {$\psi_1=\phi_{21}\phi_{11}$};
\node[draw=black!30!green] (layer 4b) at (layer 4a-|layer 3b) {$\psi_2=\phi_{22}\phi_{12}$};
\node[draw=black!30!green,below of=layer4a] (layer 5) at ($(layer 4a)!0.5!(layer 4b)$) {$\psi_2$};
\end{scope}

\newcommand{\ofst}{5mm}
\begin{scope}[every node/.style={dashed,draw,minimum width=5cm, minimum height=1cm,rounded corners=1mm}]
\node[DodgerBlue3] at ($(layer 1a)!0.5!(layer 1b)$)(frame 1) {};
\end{scope}
\begin{scope}[every node/.style={dashed,draw,minimum width=5cm, minimum height=1cm,rounded corners=1mm}]
\node[orange] at ($(layer 2a)!0.5!(layer 2b)$)(frame 2) {};
\end{scope}
\begin{scope}[every node/.style={dashed,draw,minimum width=5cm, minimum height=1cm,rounded corners=1mm}]
\node[orange] at ($(layer 3a)!0.5!(layer 3b)$)(frame 3) {};
\end{scope}
\begin{scope}[every node/.style={dashed,draw,minimum width=5cm, minimum height=1cm,rounded corners=1mm}]
\node[black!30!green] at ($(layer 4a)!0.5!(layer 4b)$)(frame 4) {};
\end{scope}
\begin{scope}[every node/.style={dashed,draw,minimum width=5cm, minimum height=1cm,rounded corners=1mm}]
\node[black!30!green] at ($(layer 5)$)(frame 5) {};
\end{scope}

\node[circle,fill=orange!70,minimum size=0.8cm] at ($(layer 1a)!0.5!(layer 1b)+(0,1)$) (s) {$s$};
\node[circle,fill=orange!70,minimum size=0.8cm, below=5mm of layer 5] (r) {$r$};

\begin{scope}[-latex]
\draw (s) -- (layer 1a);
\draw (s) -- (layer 1b);
\draw (layer 1a) -- (layer 2a);
\draw (layer 1b) -- (layer 2b);
\draw (layer 2a) -- (layer 3a);
\draw (layer 2b) -- (layer 3b);
\draw (layer 3a) -- (layer 4a);
\draw (layer 3b) -- (layer 4b);
\draw (layer 3a) -- (layer 4b);
\draw (layer 3b) -- (layer 4a);
\draw (layer 4a) -- (layer 5);
\draw (layer 4b) -- (layer 5);
\draw (layer 5) -- (r);
\end{scope}
\end{tikzpicture} 
\end{subfigure}
\caption{An example of a 5-layer neural network. The learned STL formula is $\psi_2 = \always_{[0,10]}\event_{[3,7]}(x<-0.7).$}
\label{fig:NN}
\end{figure*}